\newtheorem{thm}{Theorem}
\theoremstyle{plain}
\newtheorem{cor}{Corollary}
\newtheorem{dfn}{Definition}
\newtheorem{lem}{Lemma}
\newtheorem{rem}{Remark}
\numberwithin{equation}{section}
\newcommand{\bdm}{\begin{displaymath}}
\newcommand{\edm}{\end{displaymath}}
\newcommand{\Ric}{{\mathrm{Ric}}}
\newcommand{\x}{\times}
\newcommand{\ra}{\rightarrow}
\newcommand{\lra}{\longrightarrow}
\newcommand{\SO}{\ensuremath{\mathrm{SO}}}
\newcommand{\R}{\ensuremath{\mathbb{R}}}
\newcommand{\N}{\ensuremath{\mathbb{N}}}
\newcommand{\U}{\ensuremath{\mathrm{U}}}
\newcommand{\Lie}{\ensuremath{\mathcal{L}}}           
\begin{document}

\title{Geometry of higher dimensional black holes}
\author{Bernadette Lessel}
\address[Bernadette Lessel]{Mathematisches Institut, Universität Göttingen}%
\email{blessel@uni-math.gwdg.de}%
\thanks{This work originated from a diploma thesis submitted in September 2011 at Marburg University. I would like to thank Prof. Dr. Ilka Agricola of the Institut für Mathematik und Informatik of the Marburg University. Without her help and guidance this work would not have been possible.}
\date{October 23, 2012}
\subjclass{} %
\keywords{higher dimensional black holes}%

\begin{abstract}
This article investigates higher dimensional vacuum solutions of the Einstein equations.
Generalizations of the definitions of spherical and axial symmetry to higher dimensions are discussed before 
analyzing specific solutions bearing one of these symmetries. The effective motions of the Tangherlini metric are 
calculated and its Kruskal continuation is derived. Also the Myers-Perry metric is analyzed with respect to its causal 
and horizontal structure.
\end{abstract}
\maketitle

\section{Introduction}

The fact that there are only three dimensions of space is an assumption about nature that was implicitly implemented in physical 
theories. By now we have no deeper theory wich determines the dimensionality of spacetime. So it is reasonable to investigate the 
question which special features the dimension $3+1$ has by means of the theories we assume to be true. This article is an attempt to 
contribute to this question by studying higher dimensional General Relativity.

The first one who thought about studying higher dimensional vacuum solutions of the Einstein equations for this reason was Tangherlini 
in the 1960's \cite{tangh} where he found the unique static and spherical symmetric solution for arbitrary spactime dimension, now called 
Tangherlini metric. In 1986 Myers and Perry found a new family of solutions \cite{myersperry} which describe rotating black holes 
in higher dimensional spacetimes and can be considered as a generalization of the Tangherlini metric to the non static case.
Progress ist also recently made by Emparan, Reall et al \cite{empmat}, \cite{rotri}, \cite{bring}, \cite{emprl} who among other things showed 
that axial symmetric vacuum solutions need not to be unique in spacetime-dimension higher than four.

Some people see additional motivation for studying this topic by hoping to find possible factors of a higher dimensional solution of 
Superstring theory.

This article is organized as follows:
The first section establishes the notions for arbitrary-dimensional generalization of the four dimensional spherical and axial symmetry. 
The following two sections analyze solutions of this kind of symmetry. Section two investigates the Tangherlini metric which can be seen as an arbitrary-dimensional generalization of the Schwarzschild metric. For this metric the effective potential is calculated and snapshots of numerical simulations of it were added. Furthermore, a Kruskal continuation for this metric is derived. It can be seen that for spherical symmetry the unique solution behaves quite simmilarly in every dimension. The appendix of the paper derives the Ricci flatness of the Tangherlini metric, which apparently cannot be found in the literature by now. 
Section two investigates the (non-unique) axisymmetric case. To understand the issue properly, we begin with the four-dimensional case, namely the Kerr metric, and recall its causal structure. After this its generalization, the Myers-Perry solutions, are discussed in detail especially its horizon and causality structure. At first we describe rotation in just one plane, then we proceed with rotation in every possible direction.
We close the paper with a discussion of the horizon functions where we relate the different horizon generating functions of the different 
metrics with each other and find out that they have a suprisingly simple mathematical form, namely that they are "similar" to polynomials.

\section{Spacetime symmetries}

Convenient spacetime symmetries for General Relativity are the \emph{spherical} and \emph{axial} symmetry. The famous 
Schwarzschild and Kerr solution of the Einstein equation 
\bdm
\Ric-\frac{1}{2}R\cdot g=T,
\edm
either bear one of these symmetries. A natural question ist thus, how to formulate these symmetries for higher dimensional 
spacetimes. This is what will be tackled in this section. Before that, we will lay our eyes on two other notions, which are 
also very important. Namely the \emph{stationary} and the \emph{static} spacetime. For this, let in the whole section $(M,g)$ 
be a Lorentz manifold with signature $(-,+,...,+)$.

\begin{dfn}
$(M,g)$ will be called
\begin{enumerate}
	\item[1)] \emph{stationary}, if there exists a timelike Killingvector $K$ on $M$.
  \item[2)] \emph{static}, if it is stationary and $K^\bot$ is integrable.
\end{enumerate}
\end{dfn}

\begin{rem}
\begin{itemize}
\item[] 
\item With Frobenius' Theorem a Lorentz manifold is static if and only if for $\omega:=K^\flat$ it holds 
          $\omega\wedge d\omega=0$.
\item To every point of a static manifold there exist an open neighbourhood with coordinates $\{(t,x^i)\}$ in which the
          metric takes the form 
          \bdm
          g=g_{00}(x)dt\otimes dt+g_{ij}(x)dx^i\otimes dx^j,
          \edm
          where $g_{00}=g(K,K)$. For a proof of this statement see \cite{strau}.

\end{itemize}
\end{rem}

We will now focus our attention on the spherical symmetry. At first we will consider this notion at the familiar level of 
four dimensions. 

\begin{dfn}
A four-dimensional Lorentz manifold $(M,g)$ is called \emph{spherical}, if there exists a group action $L_A:M\lra M$, 
$A\in\SO(3)$, of $\SO(3)$ onto the manifold $M$, such that $L^*_A g=g$ $\forall A\in\SO(3)$ and every orbit is a two-dimensional 
spacelike surface. 
\end{dfn}

In what follows we consider a static spherical symmetric manifold with a unique Killingvector. The additional assumptions 
allow the formulation of the following statement.

\begin{lem}\label{spher}
Let $(M,g)$ be a manifold with the above assumptions. Then locally the metric $g$ can be written as 
\bdm
g=-e^{2a(r)}dt\otimes dt+e^{2b(r)}dr\otimes dr+ r^2g_{S^2},
\edm
where $t\in\R$, $r\in (R,\infty)$, $R\in\R^+$ and $g_{S^2}$ the Riemann metric on the sphere.                              
\end{lem}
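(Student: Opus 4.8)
The plan is to combine the two structural inputs we already have: the static product-form of the metric from the Remark, and the $\SO(3)$-invariance from the definition of spherical symmetry. First I would invoke the Remark to choose local coordinates $\{(t,x^i)\}$, $i=1,2,3$, in which $g=g_{00}(x)\,dt\otimes dt+h_{ij}(x)\,dx^i\otimes dx^j$, where $h$ is a Riemannian metric on the three-dimensional spatial slices and $g_{00}=g(K,K)<0$. Since the Killingvector $K=\partial_t$ is assumed unique (up to scale) and the $\SO(3)$-action preserves $g$, the action must commute with the flow of $K$; hence it acts on the spatial slices preserving $h$, with two-dimensional spacelike orbits. So the problem reduces to an essentially Riemannian one: classify $\SO(3)$-invariant metrics $h$ on a $3$-manifold whose orbits are the standard $2$-spheres.

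Next I would analyze the orbit structure of this $\SO(3)$-action on the spatial slice. Generic orbits are two-spheres, so the orbit space is one-dimensional; pick a parameter along it. The natural invariant is the \emph{area radius}: define $r>0$ so that the orbit through a point has induced area $4\pi r^2$. Away from fixed points this $r$ is a smooth function with nonvanishing differential, and it can be completed to coordinates by adding coordinates on the sphere. Because $h$ is $\SO(3)$-invariant and the orbits are the level sets of $r$, on each orbit $h$ must restrict to an $\SO(3)$-invariant metric on $S^2$, hence a constant multiple of the round metric $g_{S^2}$; by the normalization that multiple is exactly $r^2$. It remains to control the ``radial'' part and the cross terms $dr\,dx^{S^2}$: invariance under $\SO(3)$ (which acts transitively on each sphere and includes the antipodal-type reflections in $\mathrm{O}(3)\cap$ relevant isotropy) forces the mixed terms to vanish and forces the coefficient of $dr\otimes dr$ to depend on $r$ only. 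Writing that positive coefficient as $e^{2b(r)}$ and, similarly, using that $g_{00}$ is $\SO(3)$-invariant and thus a function of $r$ alone, writing $-g_{00}=e^{2a(r)}$, gives the claimed form. The range $r\in(R,\infty)$ with $R\ge 0$ just records the image of the area-radius function on the chart.

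The main obstacle, and the step I would spend the most care on, is the passage from ``$\SO(3)$ acts with two-sphere orbits'' to ``$r$ is a good coordinate and all cross terms die.'' Concretely one must: (i) argue the action is (locally) equivalent to the standard linear $\SO(3)$-action on $\R^3\setminus\{0\}$, so that isotropy groups are conjugates of $\SO(2)$ and the slice representation is understood; (ii) use the isotropy $\SO(2)$ fixing a point $p$ on an orbit — its action on $T_pM$ decomposes the tangent space into the (two-dimensional) tangent-to-orbit piece, the one-dimensional radial piece, and the one-dimensional $\partial_t$ piece, and the only $\SO(2)$-invariant symmetric bilinear forms respecting this decomposition are block-diagonal, killing $dr\otimes dx^{S^2}$ and $dt\otimes dx^{S^2}$; (iii) use transitivity of $\SO(3)$ on each orbit to upgrade ``functions constant on orbits'' to ``functions of $r$.'' A secondary subtlety is ensuring $dr\ne0$ on the chart, i.e. excluding points where the orbit degenerates, which is why the statement is local and $r$ ranges over an open half-line bounded away from $0$ if necessary. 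I would also note that the cross term $dt\otimes dr$ is already absent from the static normal form in the Remark, so no extra work is needed there; and that positivity of $e^{2b}$ and of $r^2$ follows from $h$ being Riemannian while the sign of the $dt$-term follows from $K$ being timelike.
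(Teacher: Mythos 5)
The paper does not prove this lemma itself but defers to \cite{strau}, and your argument is essentially that standard proof: the static normal form, reduction to an $\SO(3)$-invariant Riemannian metric on the spatial slices, the area radius $r$, the fact that an $\SO(3)$-invariant metric on an orbit is a multiple of the round metric, and the isotropy-group ($\SO(2)$-representation) argument that kills the cross terms and forces the remaining coefficients to be functions of $r$ alone. The one imprecision is the claim that $dr\neq 0$ away from fixed points of the action --- the area radius can have critical points on perfectly regular orbits (e.g.\ at a throat such as the bifurcation sphere of extended Schwarzschild) --- so one must simply restrict to a neighbourhood where $dr$ is nonvanishing and spacelike, which is exactly what the local formulation and the range $r\in(R,\infty)$ implicitly assume.
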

For a proof of this see again \cite{strau}. It is a well known theorem by Birkhoff that says that every spherical symmetric
manifold is automatically static.                                                                                                

\noindent 
We are now prepared for the definition of a static and spherical symmetric arbitrary-dimensional Lorentz manifold, since      
we use for this generalization the result of Lemma \ref{spher}.

\begin{dfn}\label{dfn.sph}
We call a $d+1$-dimensional Lorentz manifold $(M,g)$ \emph{static and spherical symmetric}, if locally $g$ can be written 
in the form 
\bdm
g=-e^{2a(r)}dt\otimes dt+e^{2b(r)}dr\otimes dr+ r^2g_{S^{d-2}},
\edm
where $g_{S^{d-2}}$ is the Riemann metric of the $d-2$-sphere. 
\end{dfn}

\begin{rem}
 \begin{itemize}
 \item[]
 \item The Riemann metric $g_{S^{n}}$ of the $n$-sphere with radius $1$ is of the shape
         \bdm
          g_{S^n}=\sum_{k=1}^n\left(\prod_{s=1}^{k-1}\sin^2\chi_s\right)\ d\chi_k\otimes d\chi_k,
         \edm
          for $n\in\N$ and where $\{\chi_i\}$, $i=1,...,n$, are the $n$-dimensional spherical coordinates.                         
          Put thereby for the empty product $\prod_{s=1}^0\sin^2\chi_s:=1$. In particular for $g_{S^2}$ it holds
         \bdm
         g_{S^2}=d\theta\wedge d\theta+ \sin^2\theta\ d\varphi\wedge d\varphi,
         \edm
         for $\theta=\chi_1$, $\varphi=\chi_2$. 
 \item It is supposed to hold that the above metric bears the most general shape of a metric on a $d+1$-dimensional
          stationary manifold allowing $\SO(d-1)$ as isometriegroup. Anyway, a proof is not known to the author.
 \end{itemize} 
\end{rem}

We will now have a look at how axial symmetry can be generalized to arbitrary dimensional spacetimes. 

\begin{dfn}
$(M,g)$ is called \emph{stationary and axial symmetric}, if the group $\R\x\U(1)^{d-2}$ acts isometrically, in a way that the 
orbits of the action of $\U(1)^{d-2}$ are spacelike. Additionally it is required that the Killingfield belongig to the action 
of $\R$ is asymptotically timelike.
\end{dfn}

\begin{rem}
 \begin{itemize}
 \item[]
 \item For $d=3$, $(M,g)$ is stationary and axial symmetric iff $\R\x\U(1)$ acts isometrically. Because of $U(1)\cong\SO(2)$          
           the previously given definition is indeed a generalization of the fourdimensional axial symmetry. Graphically 
           spoken, in our generalized definition we don't just consider one rotation around one axis, but $d-2$ rotations
           around spacelike hypersurfaces of codimension $2$.
 \item It is also possible to generalize the fourdimensional axial symmetry in a way that is demanded that the group $\SO(d-1)$
           acts isometrically in such a way, in that the orbits are spacelike $(d-2)$-dimensional spheres. But for the extraction of 
           solutions to the Einstein equation the above given definition is more practicable.
 \item Our definition of higher dimensional axial symmetry however has one limitation. Namely only in dimensions $4$ and $5$ 
           there exist axial symmetric manifolds which are asymptotically the Minkowskispace (that means, which are asymptotically flat)  
           and in this sense are physically significant. 
 \end{itemize}          
\end{rem}

The following theorem of T. Harmark supplies a canonical form of the metric of a stationary axial symmetric manifold.         

\begin{thm}[Harmark, 2004 \cite{harmark}]
Let $(M,g)$ be Ricci-flat and let $V_i$, $i=1,...,d-1$, be $d-1$ commuting Killinfields, which fulfill the condition
\bdm
e_\rho\wedge e_{\mu_1}\wedge...\wedge e_{\mu_{d-1}}\left(\sharp\Ric(V_i),V_{\mu_1},...,V_{\mu_{d-1}}\right)=0            
\ \ \ \forall i,\rho,\mu_j=1,...,d-1,
\edm
then there exists a coordinate system $(x^1,...,x^{d-1},r,z)$, such that it holds $V_i=\frac{\partial}{\partial x^i}$ 
and in which $g$ has the form 
\bdm
g = \sum_{i,j=1}^{d-1}G_{ij}dx^i\otimes dx^j + e^{2\nu}(dr^2+dz^2).
\edm
Thereby $r=\sqrt{|det(G_{ij})|}$, $det(G_{ij})\neq const.$ and $G_{ij}=G_{ij}(r,z)$, $\nu=\nu(r,z)$. 
This form of the metric is called \emph{canonical form} or \emph{generalized Weyl-Papapetrou-Form}.
\end{thm}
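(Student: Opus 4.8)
The plan is to reproduce, in arbitrary dimension, the classical construction that leads to the Weyl--Papapetrou form in four dimensions; it proceeds in three moves: adapted coordinates, orthogonal transitivity, and a conformal change on the two-dimensional orbit space. \textbf{Step 1 (adapted coordinates).} Since the $V_i$ commute, they integrate to a local isometric action of $\R^{d-1}$, so about a generic point there are coordinates $(x^1,\dots,x^{d-1},y^1,y^2)$ with $V_i=\partial/\partial x^i$; the $V_i$ being Killing, every component of $g$ is independent of the $x^i$, and in particular $G_{ij}:=g(V_i,V_j)$ depends on $(y^1,y^2)$ only. At a generic point the distribution $\mathcal{V}:=\mathrm{span}(V_1,\dots,V_{d-1})$ is nondegenerate; since $g$ has signature $(-,+,\dots,+)$, the block $(G_{ij})$ is then Lorentzian, and $\mathcal{D}:=\mathcal{V}^{\perp}$ is a $2$-dimensional spacelike distribution, hence equipped with a Riemannian metric.

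\textbf{Step 2 (orthogonal transitivity).} The decisive point is that $\mathcal{D}$ is integrable --- the higher-dimensional circularity theorem --- and this is exactly where the hypothesis on $\sharp\Ric(V_i)$ is used (it is automatic once $g$ is Ricci-flat, but is the substantive assumption in Harmark's general version). The obstruction to Frobenius' criterion for $\mathcal{D}$ is carried by the top-degree forms $V_1^\flat\wedge\cdots\wedge V_{d-1}^\flat\wedge d(V_i^\flat)$; contracting the Killing identity $\nabla_a\nabla_b\xi_c=R_{cbad}\,\xi^d$ (with $\xi=V_i$) against the remaining $V_j$ and invoking $\sharp\Ric(V_i)\in\mathcal{V}$ shows that the associated twist functions have vanishing differential, hence are locally constant, hence vanish after normalization on the locus where the spacelike symmetries degenerate --- exactly as in the four-dimensional arguments of Papapetrou and Kundt--Tr\"umper/Carter. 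Frobenius then supplies coordinates in which the cross terms $g(V_i,\partial/\partial y^a)$ all vanish, so that
\bdm
g=\sum_{i,j=1}^{d-1}G_{ij}(y)\,dx^i\otimes dx^j+h_{ab}(y)\,dy^a\otimes dy^b ,
\edm
with $h$ a Riemannian metric on the (local) two-dimensional orbit space.

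\textbf{Step 3 (conformal coordinates and the Weyl equation).} Locally $h$ admits isothermal coordinates, so we may take $h=e^{2\sigma}\big((dy^1)^2+(dy^2)^2\big)$. Computing $\Ric(g)$ for this block metric, the mixed components vanish identically, while the field equations force $\Delta_h\big(\sqrt{|\det(G_{ij})|}\,\big)=0$ --- the higher-dimensional analogue of the familiar fact that $\sqrt{|\det g_{AB}|}$ is harmonic in the four-dimensional Weyl construction. As $\det(G_{ij})$ is non-constant by hypothesis, $r:=\sqrt{|\det(G_{ij})|}$ has nowhere-vanishing differential on a dense open set; picking a local harmonic conjugate $z$ of $r$ (so $r+iz$ is holomorphic in $y^1+iy^2$) makes $(r,z)$ into coordinates in which $h=e^{2\nu}(dr^2+dz^2)$. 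Since $G_{ij}$, $\sigma$, and hence $\nu$ are isometry-invariant, they are functions of $(r,z)$ only, and $g$ has the asserted canonical form with $r=\sqrt{|\det(G_{ij})|}$.

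\textbf{The main obstacle} is Step 2. Steps 1 and 3 are routine: the first is the normal form for a free abelian isometry group, and the third is a lengthy but mechanical substitution into the vacuum equations together with the elementary two-dimensional facts about isothermal coordinates and harmonic conjugates. Integrability of $\mathcal{D}$, by contrast, is already a nontrivial theorem in dimension four, and in the present generality one must handle $d-1$ Killing fields at once and, crucially, supply the regularity/asymptotic input --- existence of an "axis" where the spacelike Killing fields vanish --- that upgrades "locally constant twist" to "zero twist".
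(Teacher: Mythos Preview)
The paper does not prove this theorem; it is quoted with attribution to Harmark \cite{harmark} and immediately followed by remarks, with no argument supplied. There is therefore no ``paper's own proof'' to compare your proposal against.

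That said, your three-step outline is the standard route to results of this type and is, at the level of a sketch, sound: adapted coordinates for commuting Killing fields, the circularity/orthogonal-transitivity theorem to kill the cross terms, then isothermal coordinates on the two-dimensional orbit space together with the harmonicity of $r=\sqrt{|\det G_{ij}|}$ coming from the vacuum equations. You correctly flag that Step~2 is where the real content lies and that one needs some global input (an axis or asymptotic condition) to turn ``locally constant twist'' into ``zero twist''; the theorem as stated here does not supply that input explicitly, so your caveat is apt. One minor point: since the hypothesis already includes Ricci-flatness, the displayed condition on $\sharp\Ric(V_i)$ is vacuous as written --- you note this parenthetically, and indeed in Harmark's original the Ricci condition is the substantive hypothesis while Ricci-flatness is the application.
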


\begin{rem}
\begin{itemize}
	\item[]
	\item On stationary and axial symmetric manifolds the group $\R\x\U(1)^{d-1}$ acts per definition isometrically. Because 
	      of this action $d-1$ commuting Killingfields are given.
	\item In components, the condition of the prior theorem reads
	      \bdm
	       V_i^\nu Ric_\nu^{[\rho}V_1^{\mu_1}V_2^{\mu_2}\cdot...\cdot V_{d-1}^{\mu_{d-1}]}=0\ \ \ \forall i,\rho,\mu_j=1,...,d-1.
	      \edm
	\item One can reason that solutions of the Einstein equation which are asymptotically the $4$- or $5$-dimensional Minkowskispace,
	      always satisfy the conditions of the prior theorem. For $d=3$ these conditions are always satisfied. See again \cite{harmark} 
	      for a justification of these statements.
	\item For $d=3$ and $G_{11}=-e^{2U}$, $G_{12}=-e^{2U}A$ and $G_{22}=e^{-2U}(r^2-A^2e^{4U})$ in the coordinates $x^1=t$
	      and $x^2=\phi$ one gets the well-known \emph{Papapetrou-Form} 
	      \bdm
	      g=-e^{2U}\left(dt+Ad\phi\right)^2+e^{-2U}r^2d\phi^2+e^{-2\nu}\left(dr^2+dz^2\right),
	      \edm
	      which serves as an ansatz for the Kerr metric.
\end{itemize}
\end{rem}

\section{Spherical symmetry: The Tangherlini metric}
In 1963 Tangherlini found in \cite{tangh} a generalization of the Schwarzschild metric in such a way that the dimensionality 
$d+1$ of spactime is arbitrary:
\begin{equation}
g_{T}:=-\left(1-\frac{\mu}{r^{d-2}}\right)dt^2+\frac{1}{\left(1-\frac{\mu}{r^{d-2}}\right)}dr^2+r^2 g_{S^{d-1}}\label{eq.ST},
\end{equation}
where $\mu$ describes the mass-parameter $\mu=\frac{4\pi M}{(d-1)\Omega_{d-1}}$, in which $\Omega_{d-1}$ denotes the volume 
of the $(d-1)$-dimensional unit sphere and $M$ the mass of the gravitating object in the far field. Setting $d=3$ yields the
Schwarzschild metric. We assume that $\mu$ and $r$ are strictly positive. After comparing with definition \ref{dfn.sph} we 
see that $g_{T}$ is stationary and axial symmetric for $r^{d-2}>\mu$. 
We want to call the hypersurface $\{r^{d-2}=\mu\}$ \emph{Tangherlini sphere}, which is given as the set of roots of the 
function $\Delta_T:=1-\frac{\mu}{r^{d-2}}$ and which generalizes the \emph{Schwarzschild sphere}. Because the latter carries 
the properties of an event horizon, we want to call $\Delta_T$ \emph{horizon function}.
Also, the Tangherlini metric is asymptotically flat. It is shown, \cite{birk} that the theorem of Birkhoff is independent 
of the dimension of spacetime. That means that every stationary and spherical symmetric solution of the Einstein equation in 
$(d+1)$-dimension belongs to the family of the Tangherlini metrics. A proof of $g_{T}$ actually being a solution of the 
Einstein equation is given in the appendix.

\subsection{Effective motions in Tangherlini spacetime}
Consider now a timelike geodesic $\gamma(s)=(t(s),r(s),\chi_1(s),...,\chi_{d-2}(s))$ with $r>r^{d-2}$ for all $s\in\R$. 
We use the equivalence of the geodesic equation with the Euler-Lagrange equation 
$\frac{\partial \Lie}{\partial x^i}=\frac{d}{dt}\frac{\partial \Lie}{\partial\dot{x}^i}$, with the lagrangian          
\begin{eqnarray*}
2\Lie=g_{T}(\dot{\gamma},\dot{\gamma})&=& -\left(1-\frac{\mu}{r^{d-2}}\right)\dot{t}^2+\frac{1}{\left(1-\frac{\mu}{r^{d-2}}
\right)}\dot{r}^2\\
                                    & &+r^2\left(\dot{\chi_1}^2+\sin^2\chi_1\dot{\chi_2}^2+...+\prod_{s=1}^{d-3}\sin^2\chi_s\dot{\chi_{d-2}}^2\right).
\end{eqnarray*}
The dot $\dot{}$ refers to differentiation with respect to the proper time $s$. We consider plane 
motions that means $\chi_i=\frac{\pi}{2}$ for all $i>1,\ s\in\R$. The fact that $\partial_t$ and $\partial_{\chi_1}$ are 
Killingvectors is equivalent to $t$ and $\chi_1$ being cyclic. It thus holds
\begin{eqnarray*}
-\frac{\partial \Lie}{\partial\dot{t}}        &=& \left(1-\frac{\mu}{r^{d-2}}\right)\dot{t}=const.=:E\\
 \frac{\partial \Lie}{\partial\dot{\chi_{1}}} &=& r^2\dot{\chi_1} = const. =:L.                                  
\end{eqnarray*}
If we plug these equations into $2\Lie=-1$ (we consider timelike motions), the following equation reveals:
\bdm
1=\frac{1}{\left(1-\frac{\mu}{r^{d-2}}\right)}E^2-\frac{1}{\left(1-\frac{\mu}{r^{d-2}}\right)}\dot{r}^2-\frac{L^2}{r^2}.
\edm
Tranforming this equation one gets the equation for the energy of the system
\begin{equation}\label{eq.energie}
E^2=\dot{r}^2+V(r),
\end{equation}
with effective potential
\bdm
V(r):=\left(\frac{L^2}{r^2}+1\right)\left(1-\frac{\mu}{r^{d-2}}\right).
\edm
Considering lightlike motions that means $2\Lie=0$, one gains by means of analogous calculations equation \ref{eq.energie} for the 
energy of the system with effective potential 
\bdm
\tilde{V}(r):=\frac{L^2}{r^2}\left(1-\frac{\mu}{r^{d-2}}\right).
\edm

The values of $V(r)$ converge to $-\infty$, if $r\ra0$, and to $1$, if $r\ra\infty$. We now want to find out, how this potential 
behaves in between. For the existence of extremals we have to find roots of the derivative:
\begin{eqnarray}
& & \frac{d}{dr}V(r)=-\frac{2L^2}{r^3}+\frac{L^2\mu d}{r^{d+1}}+\frac{(d-2)\mu}{r^{d-1}} = 0 \\
&\Longleftrightarrow &  r^{d-2}-\frac{(d-2)\mu}{2L^2}r^2-\frac{\mu d}{2} = 0.\label{eq.extrem}
\end{eqnarray}
For a criterion, if the extremals are local minima or maxima, we analyze the second derivative of $V(r)$:
\begin{eqnarray}
& & \frac{d^2}{dr^2}V(r)=\frac{6L^2}{r^4}-\frac{d(d+1)L^2\mu}{r^{d+2}}+\frac{(d-2)(d-1)\mu}{r^d}\gtrless 0\\
&\Longleftrightarrow& r^{d-2}-\frac{(d-2)(d-1)}{6}\frac{\mu}{L^2}r^2-\frac{d(d+1)}{6}\mu\gtrless 0.\label{eq.max}
\end{eqnarray}
We will now focus on the cases $d=3$ and $d=4$. Let's start with $d=3$. Equation $(\ref{eq.extrem})$ is solved by                  
\bdm
r = \pm L\sqrt{\frac{L^2}{\mu^2}-3}+\frac{L^2}{\mu}.
\edm
Because $L\sqrt{L^2/\mu^2-3}<L^2/\mu$, both solutions are indeed positive. And since there exist exactly two extremals 
because of the given asymptotics of the potential, they have to be one minimum and one maximum, where the minimum is taken 
at a higher value of $r$ than the maximum. We see furthermore that for $\frac{L}{\mu}<\sqrt{3}$ no extremals exist and also 
no closed orbits. In particular every particle with $E^2<1$ moves with increasing velocity onto the Schwarzschild sphere. 

Let now be $d=4$. In this case $(\ref{eq.extrem})$ is solved by
\bdm
r = \sqrt{\frac{\mu}{1-\frac{\mu}{L^2}}}.
\edm
Because we only want to consider positive values of $r$, only the positive root is of interest here. Is $d=4$, inequality 
$(\ref{eq.max})$ is equivalent to $\left(1-\frac{\mu}{L^2}\right)r^2-\frac{5}{3}\mu\gtrless 0$ and we see that 
$r=\sqrt{\frac{\mu}{1-\frac{\mu}{L^2}}}$ for $L^2\geq\mu$ is a local maximum. Is $L^2<\mu$, no extremals exist and again, 
a particle with energy $E^2<1$ would move with increasing velocity onto the Tangherlini sphere.
In particular, local minimal do not exist for whatever values of $L$ and $\mu$, what means that no \emph{stable} bounded orbits exist.
The conjecture is that only for $d=3$ there exist stable circular orbits. 
In Figure \ref{fig.potential} the function $\sqrt{V(r)}$ is pictured for different values of the angular momentum $L$ in dimensions 
$d=3,4,5$ with $\mu=1$.
\begin{figure}[hp]
 \centering
 \includegraphics[width=5cm]{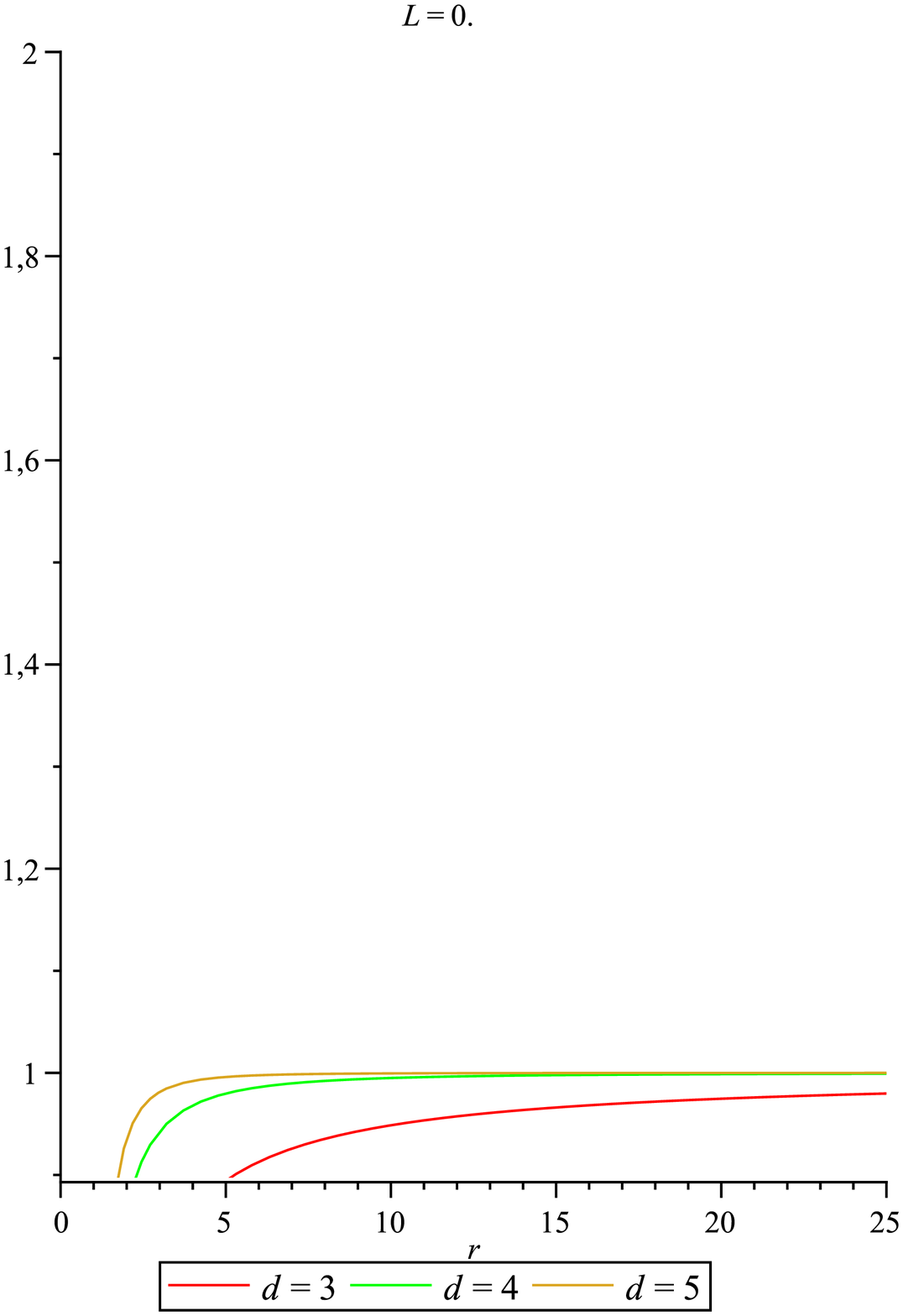}
 \includegraphics[width=5cm]{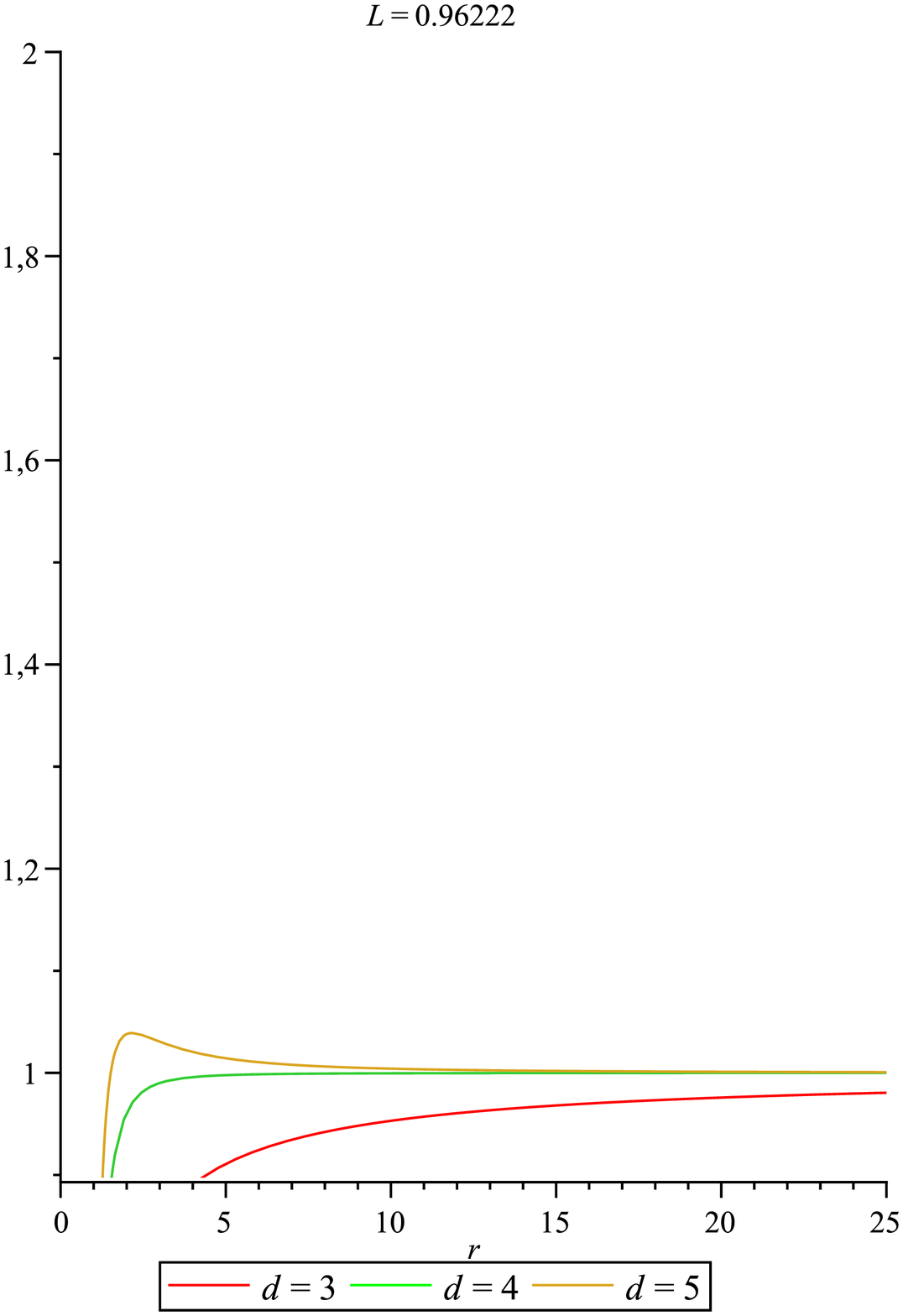}
 \includegraphics[width=5cm]{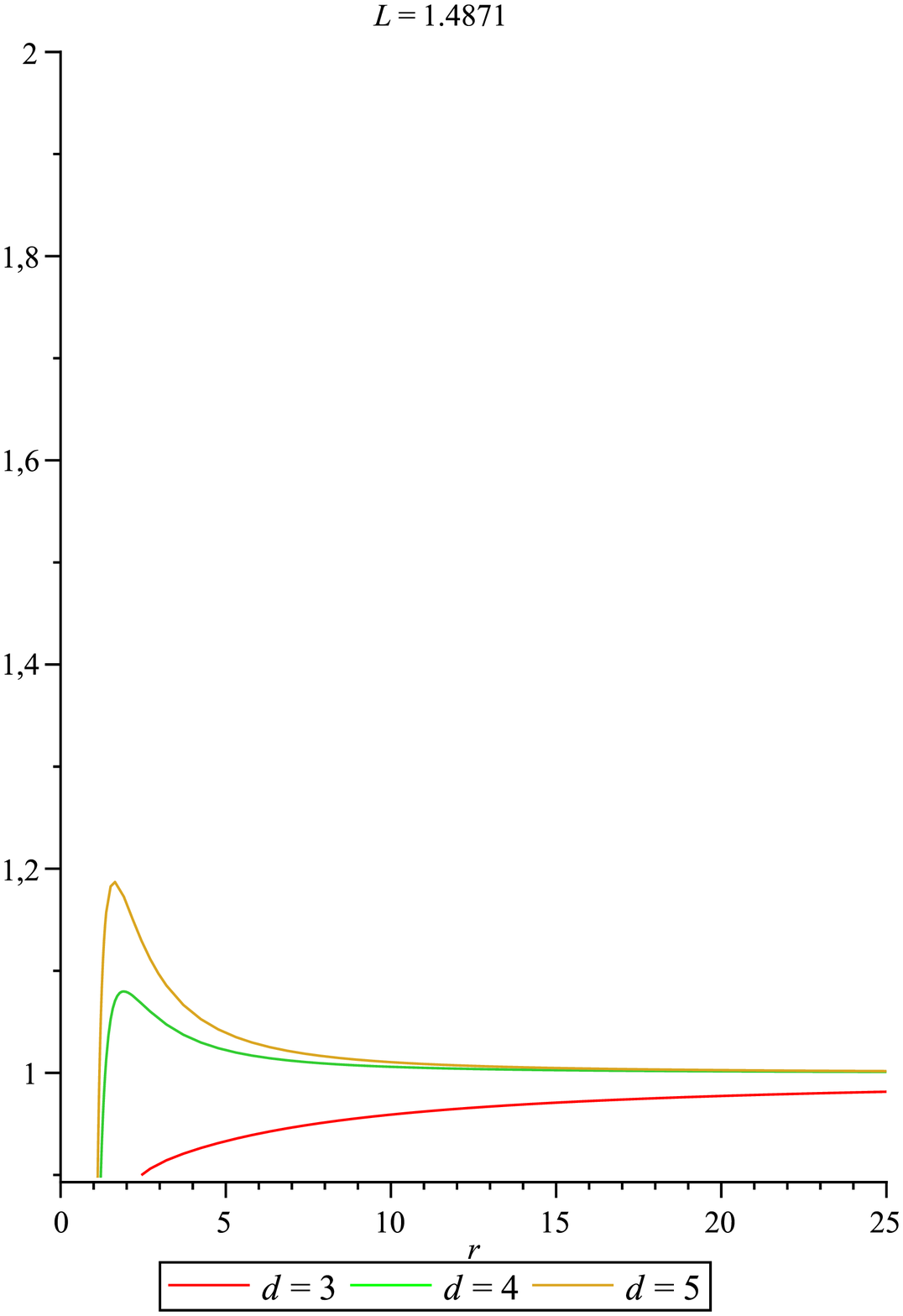}
 \includegraphics[width=5cm]{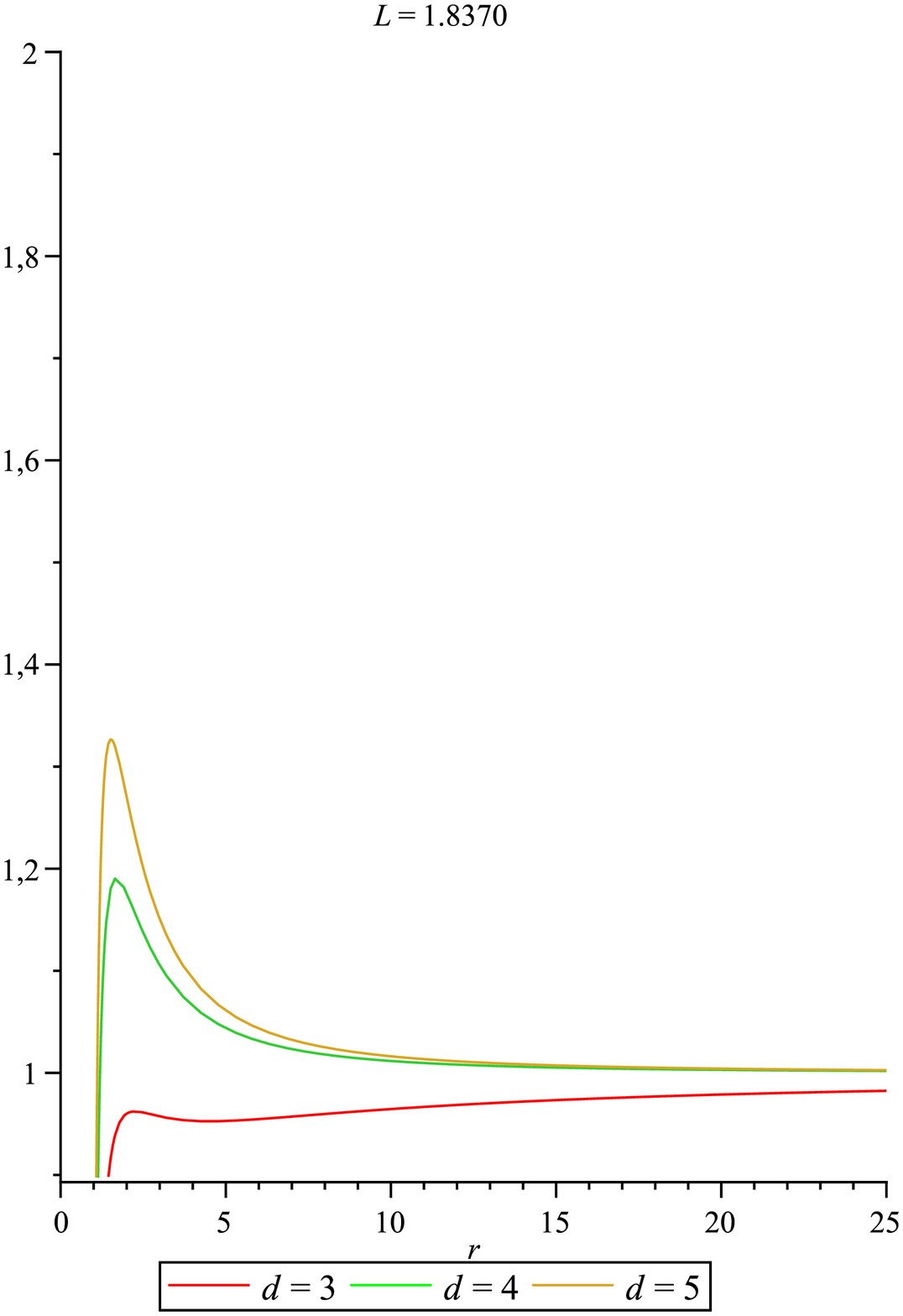}
 \includegraphics[width=5cm]{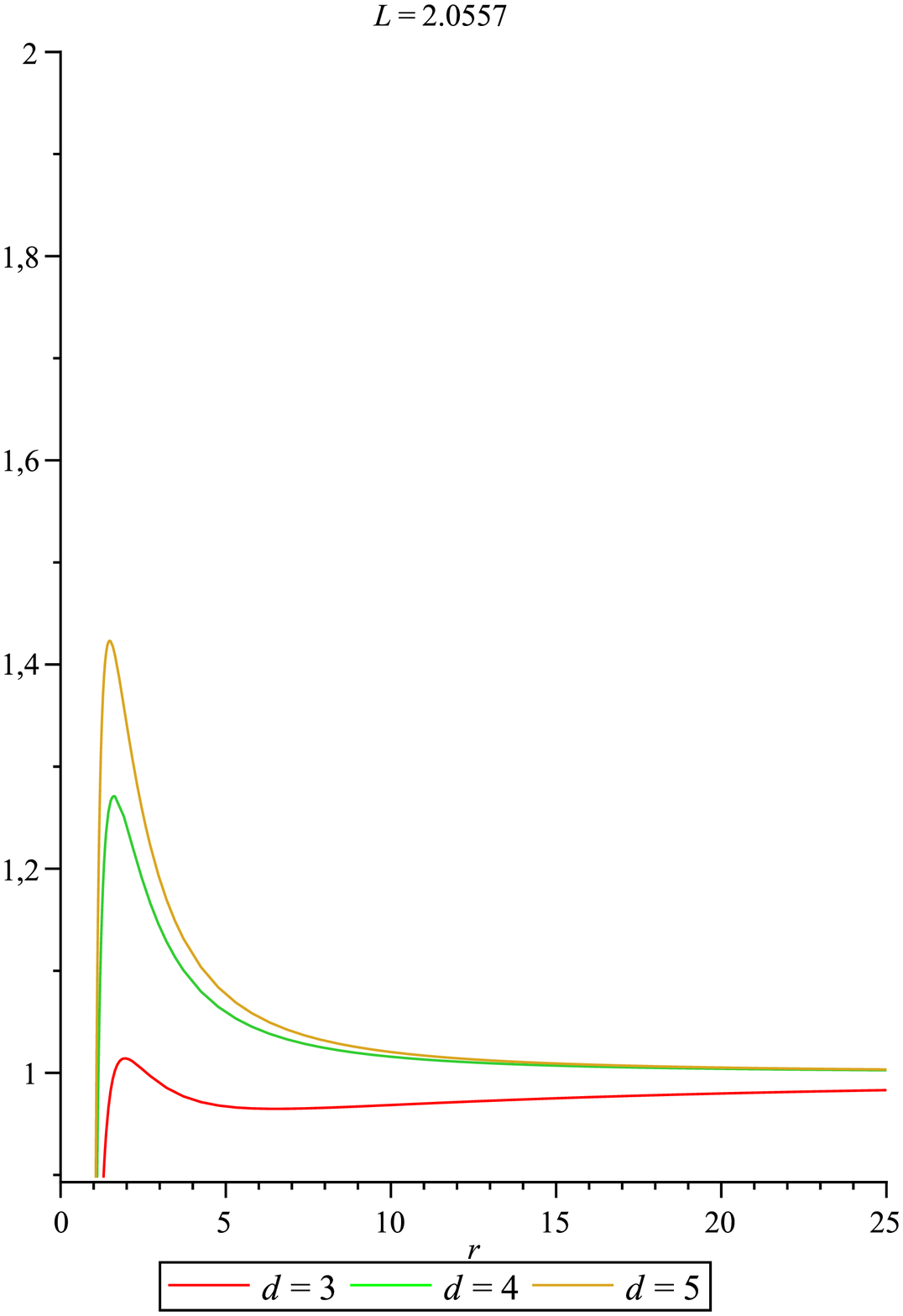}
 \includegraphics[width=5cm]{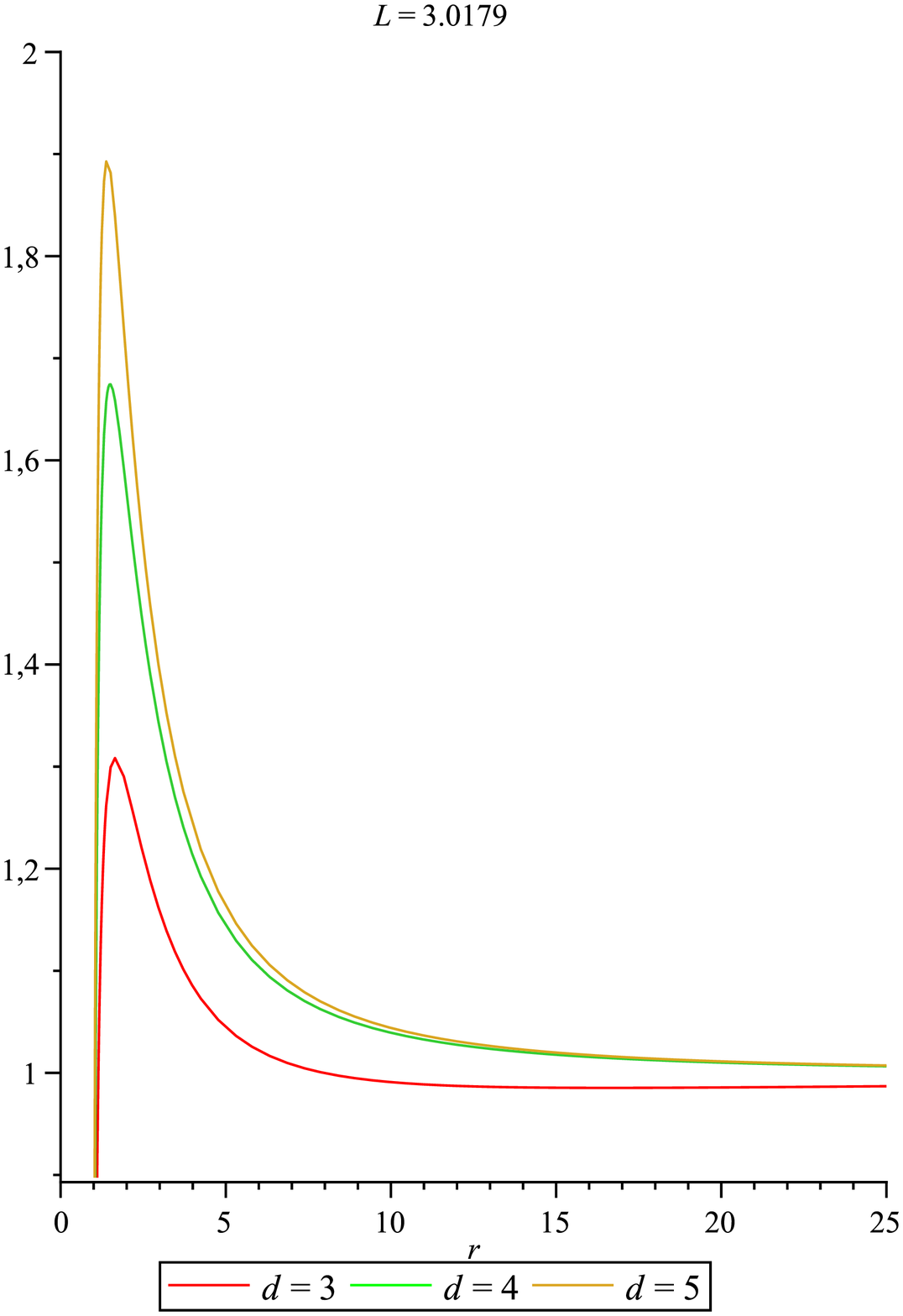}
 \caption{For $\mu=1$ the function $\sqrt{V(r)}$ is plotted, for different values of the angular momentum $L$ in the dimensions $d=3,4,5$.}
 \label{fig.potential}
\end{figure}
\subsection*{}
We now derive the differential equation for the plane motion $r(\varphi)$. First of all it holds 
$\dot{r}=r'\dot{\varphi}=r'\frac{L}{r^2}$ because of $r':=\frac{\partial r}{\partial\varphi}=\frac{\dot{r}}{\dot{\varphi}}$ 
and $L\equiv r^2\dot{\chi_1}$. Plugging this into the energy equation $\dot{r}^2+V(r)=E^2$, one obtains
\bdm
r'^2\frac{L^2}{r^4}=E^2-V(r).
\edm
Next, we perform the change of coordinates $u=1/r$. With this, it holds $r'=-\frac{u'}{u^2}$ and therefore
\begin{eqnarray*}
& & L^2u'^2=E^2-(1-\mu u^{d-2})(1+L^2u^2)= E^2-1-L^2u^2+\mu u^{d-2}+\mu L^2 u^d\\
&\Longleftrightarrow& L^2u'^2+L^2u^2= E^2-1+\mu u^{d-2}+\mu L^2 u^d\\
&\Longleftrightarrow& u'^2+u^2=\frac{E^2-1}{L^2}+\frac{\mu}{L^2}u^{d-2}+\mu u^d.\ \ \ \ \ \ \ \ \ \ \ \ \ (\star)
\end{eqnarray*}
Differentiating this expression with respect to $\varphi$, one obtains                                                      
\bdm
2u'u''+2uu'=\frac{(d-2)\mu}{L^2}u'u^{d-3}+d\mu u'u^{d-1}.
\edm
It follows that either $u'=0$, which is equivalent to $r=const.$ and therefore corresponds to circular motion, or $u$ behaves 
corresponding to the equation
\bdm
u''+u=\frac{(d-2)\mu}{2L^2}u^{d-3}+\frac{d}{2}\mu u^{d-1}.
\edm

For $d=3$ and $\mu\equiv2m$ the solution of this orbital equation is a modification of the Kepler ellipse 
$u(\varphi)=\frac{m}{L^2}(1+e\cos\varphi)$ with eccentricity $e$:
\bdm
u(\varphi)=\frac{m}{L^2}(1+e\cos\varphi)+\frac{3m^3}{L^4}\left(1+\frac{e^2}{2}-\frac{e^2}{6}\cos2\varphi+e\varphi\sin\varphi\right).
\edm

If one plugs in $u'=0$ into equation $(\star)$, one obtains the circular orbits dependence of the existence from the 
energy $E$ and the angular momentum $L$ of a testparticle at the point $u$:
\bdm
u^d+\frac{1}{L^2}u^{d-2}-u^2=\frac{1-E^2}{\mu L^2}.
\edm
Even in dimensions $d+1=4$ and $d+1=5$ the solutions are quite complicated expressions and are therefore omitted here. But in 
principle they are easy to calculate.

\begin{rem}
For another approach calculating the effective orbital potential of the Tangherlini metric see \cite{tangh}, p. 645.
\end{rem}

\subsection{The Kruskal continuation of the Tangherlini spacetime}
In this subsection we want to see that the Tangherlini metric possesses a continuation on $r^{d-2}\leq\mu$. This will be a
generalization of the known \emph{Kruskal continuation} of the Schwarzschild metric. The associated calculations generalize 
those of \cite{strau}.
At first we observe that space and time switch their role at $r^{d-2}=\mu$. Namely it holds
\bdm
g_{tt}  =  -\left(1-\frac{\mu}{r^{d-2}}\right),\quad 
g_{rr}  =  \frac{1}{1-\frac{\mu}{r^{d-2}}}.
\edm
This means for $r>\sqrt[d-2]{\mu}$,\quad $\partial_t$ is timelike and $\partial_r$ is spacelike. For $r<\sqrt[d-2]{\mu}$\quad 
however $\partial_t$ is spacelike and $\partial_r$ is timelike. Furthermore it is known that in four spacetime dimensions a 
testparticle takes infinitely long coordinate time $t$ to reach the sphere $r^{d-2}=\mu$, whereas it only needs finite proper time.
This indicates that the coordinates $t$ and $r$ are not adequate for the physical circumstances at $r=\sqrt[d-2]{\mu}$. 
Therefore we try to introduce new coordinates $(u,v)$ which are more appropriate to the geometry. We get a hint how to do this 
by looking at the description of the behaviour of the lightcones. Consider a light cone in radial direction, the Schwarzschild 
metric yields a description of this motion by
\bdm 
\frac{dr}{dt}=\pm\left(1-\frac{\mu}{r^{d-2}}\right).
\edm
If $r\downarrow \sqrt[d-2]{\mu}$, the opening angle of the light cone becomes infinitesimally small, which means that a test 
particle in this inertial system gets accelerated to the velocity of light when moving to the sphere $r=\sqrt[d-2]{\mu}$.
The following Ansatz for the metric in the new coordinates $(u,v)$ therefore seems to be appropriate:
\begin{equation}
g_{T}=-f^2(u,v)(dv\otimes dv-du\otimes du)+r^2g_{s^{d-1}}.\label{eq.0}
\end{equation}
It now holds  $(du/dv)^2=1$, for $f^2\neq0$, this means constant opening angles of the light cones for radial movements. Thus, 
we are looking for a coordinate transformation $h:(r,t)\mapsto (u,v)$ under which the Tangherlini metric behaves like
\bdm
h^*\left(-f^2(u,v)(dv\otimes dv-du\otimes du)\right)=-\left(1-\frac{\mu}{r^{d-2}}\right)dt\otimes dt+\frac{1}{1-\frac{\mu}{r^{d-2}}}dr\otimes dr,
\edm
for an $f=f(u,v)$ with $(h^*f)^2\neq0$ at $r=\sqrt[d-2]{\mu}$. In components this equation reads
\begin{eqnarray*}
 \left(1-\frac{\mu}{r^{d-2}}\right)&=& f^2\left(\left(\frac{\partial v}{\partial t}\right)^2-\left(\frac{\partial u}{\partial t}\right)^2\right),\\
-\left(1-\frac{\mu}{r^{d-2}}\right)&=& f^2\left(\left(\frac{\partial v}{\partial r}\right)^2-\left(\frac{\partial u}{\partial r}\right)^2\right),\\
           0 &=& \frac{\partial u}{\partial t}\cdot\frac{\partial u}{\partial r}-\frac{\partial v}{\partial t}\cdot\frac{\partial v}{\partial r}.
\end{eqnarray*}
To simplify calculations, we introduce a new radial coordinate $r^*:=r+\mu \ln{\left(\frac{r^{d-2}}{\mu}-1\right)}$ and a 
function $F(r^*):=\frac{1}{\tilde{f}^2(r)}\left(1-\frac{\mu}{r^{d-2}}\right)$, where $\tilde{f}:=h^*f$. 

We assumed that it is possible to find a coordinate transformation which behaves like $h^*f=h^*f(r)$.
With this, the above equations take the following form:
\begin{eqnarray}                                                                                             
 F(r^*) &=& \left(\frac{\partial v}{\partial t}\right)^2-\left(\frac{\partial u}{\partial t}\right)^2,\label{eq.1}\\                  
-F(r^*) &=& \left(\frac{\partial v}{\partial r^*}\right)^2-\left(\frac{\partial u}{\partial r^*}\right)^2,\label{eq.2}\\ 
\frac{\partial u}{\partial t}\cdot\frac{\partial u}{\partial r^*} &=& \frac{\partial v}{\partial t}\cdot\frac{\partial v}{\partial r^*}.\label{eq.3}
\end{eqnarray}
Taking skillfull linear combinations, namely $(\ref{eq.1})+(\ref{eq.2})\pm 2\cdot (\ref{eq.3})$, we obtain
\begin{eqnarray*}
\left(\frac{\partial v}{\partial t}+\frac{\partial v}{\partial r^*}\right)^2 &=& \left(\frac{\partial u}{\partial t}+\frac{\partial u}{\partial
                                                                                  r^*}\right)^2,\\
\left(\frac{\partial v}{\partial t}-\frac{\partial v}{\partial r^*}\right)^2 &=& \left(\frac{\partial u}{\partial t}-\frac{\partial u}{\partial
                                                                                  r^*}\right)^2.
\end{eqnarray*}
Taking the square root out of both equations and choosing the positive sign of the root for the first equation and the negative 
sign for the second equation leads to the result that the Jacobi-Determinant doesn't vanish. We now get
\bdm
\frac{\partial v}{\partial t}=\frac{\partial u}{\partial r^*},\ \ \ \frac{\partial v}{\partial r^*}=\frac{\partial u}{\partial t}.
\edm
Differentiating the first equation with respect to $r^*$ and the second equation with respect to $t$, one can deduce the 
following wave equation:
\bdm
\frac{\partial^2 u}{\partial t^2}-\frac{\partial^2 u}{\partial r^{*2}}=0,\ \ \ \frac{\partial^2 v}{\partial t^2}-\frac{\partial^2 v}{\partial r^{*2}}=0.
\edm
The most general solution is
\begin{eqnarray*}
v &=& h(r^*+t)+g(r^*-t)\\
u &=& h(r^*+t)-g(r^*-t).
\end{eqnarray*}
We now plug in these expressions for $u$ and $v$ into the equations $(\ref{eq.1})$ to $(\ref{eq.3})$. At first we discover 
that equation $(\ref{eq.3})$ is fullfilled identically and thus leads to no new condition. Equations $(\ref{eq.1})$ and $(\ref{eq.2})$
on the other hand provide the condition $F(r^*)=(h'-g')^2-(h'+g')^2=-(h'+g')^2+(h'-g')^2$, which leads to the following 
identity for $F(r^*)$:
\bdm
F(r^*)=-4 h'(r^*+t)g'(r^*-t).
\edm
Differentiating this expression, once with respect to $r^*$ and once with respect to $t$, we get
\begin{eqnarray}
F'(r^*) &=& -4(h''g'+h'g'')\label{eq.4}\\
      0 &=& -4(h''g'-h'g'').\label{eq.5}
\end{eqnarray}
For now, we assume $r>\sqrt[d-2]{\mu}$. In this case is holds $F(r^*)>0$ and from $(\ref{eq.4})$ and $(\ref{eq.5})$
we can deduce the equations
\begin{eqnarray*}
\frac{F'(r^*)}{F(r^*)} &=& \frac{h''(r^*+t)}{h'(r^*+t)}+\frac{g''(r^*-t)}{g'(r^*-t)}\\
                     0 &=& \frac{h''(r^*+t)}{h'(r^*+t)}-\frac{g''(r^*-t)}{g'(r^*-t)}.
\end{eqnarray*}
And with this
\bdm
\frac{F'(r^*)}{F(r^*)} = 2\frac{h''(r^*+t)}{h'(r^*+t)},
\edm
which is equivalent to 
\begin{equation}
(\ln{F(r^*)})'= 2(\ln{h'})'(r^*+t)\label{eq.6}.
\end{equation}
In this formula, both sides have to be equal to the same constant, which we will call $2\eta$. With the choice of the integration
constant $c$ for the left hand side as $c=\ln{\eta^2}$, it follows $\ln F(r^*)=2\eta r^*+\ln\eta^2$, what means that 
$F(r^*)=\eta^2\exp(2\eta r^*)$. 
Defining $y:=r^*+t$ and choosing $\ln\frac{\eta}{2}$ as the integration constant of the right hand side that means 
$\ln h'=\eta y+\ln\frac{\eta}{2}$, it holds $h'=\frac{\eta}{2}\exp(\eta y)$ and therefore, $h=\frac{1}{2}\exp(\eta y)$.

By means of formula $(\ref{eq.5})$ it is now also possible to find an expression for $g(y)$. Namely because of 
$h''=(\frac{\eta^2}{2}e^{\eta y})$ it holds
\bdm
0=\frac{\frac{\eta^2}{2}e^{\eta y}}{\frac{\eta}{2}e^{\eta y}}-\frac{g''(y)}{g'(y)} \Leftrightarrow
\eta=\frac{g''(y)}{g'(y)} \Leftrightarrow
g''(y)=\eta g'(y)         \Leftrightarrow
g'(y)=C e^{\eta y}.
\edm
Choosing $C=-\frac{\eta}{2}$ we obtain the follwing expressions:
\begin{gather}
g(y)=-\frac{1}{2}e^{\eta y},\ h(y)=\frac{1}{2}\exp(\eta y),\ F(r^*)=\eta^2\exp(2\eta r^*).\label{eq.7}
\end{gather}
With this we can now determine $u$ and $v$:
\bdm
 u=h(r^*+t)-g(r^*-t)=\frac{1}{2}e^{\eta}(r^*+t)+\frac{1}{2}e^{\eta(r^*-t)}=e^{\eta r^*}\cosh(\eta t)
  =e^{\eta\left(r+\mu\ln\left(\frac{r^{d-2}}{\mu}-1\right)\right)}\cosh(\eta t),
\edm
that means
\bdm
 u=e^{\eta r}\left(\frac{r^{d-2}}{\mu}-1\right)^{\mu\eta}\cosh(\eta t).
\edm
Analogously it holds 
\begin{eqnarray*}
 v &=& h(r^*+t)+g(r^*-t)=\frac{1}{2}e^{\eta(r^*+t)}-\frac{1}{2}e^{\eta(r^*-t)}=e^{\eta r^*}\sinh(\eta t)\\
   &=& e^{\eta(r+\mu\ln\left(r+\mu\ln\left(\frac{r^{d-2}}{\mu}-1\right)\right))}\sinh(\eta t),
\end{eqnarray*}
and thus
\bdm
 v=e^{\eta r}\left(\frac{r^{d-2}}{\mu}-1\right)^{\mu\eta}\sinh(\eta t).
\edm
Furthermore, with the expression for $F(r^*)$ from $(\ref{eq.7})$ it holds
\begin{eqnarray*}
\tilde{f}^2 &=& 
              \frac{1-\frac{\mu}{r^{d-2}}}{F(r^*)}=\frac{1-\frac{\mu}{r^{d-2}}}{\eta^2e^{2\eta\left(r+\mu\ln\left(\frac{r^{d-2}}{\mu}-1\right)\right)}}
           =  \frac{1-\frac{\mu}{r^{d-2}}}{\eta^2e^{2\eta r}\left(\frac{r^{d-2}}{\mu}-1\right)^{2\mu\eta}}\\
          &=& \frac{1}{\eta^2}e^{-2\eta r}\frac{\mu}{r^{d-2}}\frac{\frac{r^{d-2}}{\mu}-1}{\left(\frac{r^{d-2}}{\mu}-1\right)^{2\mu\eta}},
\end{eqnarray*}
and with this
\bdm
\tilde{f}^2=\frac{\mu}{\eta^2r^{d-2}}e^{-2\eta r}\left(\frac{r^{d-2}}{\mu}-1\right)^{1-2\mu\eta}.
\edm
Now, $\eta$ is chosen so that $\tilde{f}^2\neq 0$ at $r=\sqrt[d-2]{\mu}$. That means $\eta=\frac{1}{2\mu}$. It follows
\bdm
\tilde{f}^2=\frac{\mu}{r^{d-2}(2\mu)^2e^{-\frac{r}{\mu}}}=\frac{4\mu^3}{r^{d-2}}e^{-\frac{r}{\mu}}.
\edm
The generalized Kruskal transformation is thus given by
\begin{eqnarray*}
 u &=& \sqrt{\left(\frac{r^{d-2}}{\mu}-1\right)}e^{r/2\mu}\cosh\left(\frac{t}{2\mu}\right)\\
 v &=& \sqrt{\left(\frac{r^{d-2}}{\mu}-1\right)}e^{r/2\mu}\sinh\left(\frac{t}{2\mu}\right).
\end{eqnarray*}
and in these coordinates the Tangherlini metric has the form $(\ref{eq.0})$ with 
\bdm
\tilde{f}^2=\frac{4\mu^3}{r^{d-2}}e^{-r/\mu}.
\edm

To derive equation $(\ref{eq.6})$, we made the assumption that $r>\sqrt[d-2]{\mu}$. Hence we only found a coordinate 
transformation so far, but no continuation for $r\leq\sqrt[d-2]{\mu}$. However, $\tilde{f}^2$ is also defined for 
$0<r\leq\sqrt[d-2]{\mu}$. To see this, we first consider the following equations:
\begin{eqnarray}
u^2-v^2     &=& \left(\frac{r^{d-3}}{\mu}-1\right)e^r/\mu\label{eq.hyp}\\
\frac{v}{u} &=& \tanh{\frac{t}{2\mu}}\label{eq.line}.
\end{eqnarray}
The region $r>\sqrt[d-2]{\mu}$\quad corresponds to the region $u>|v|$. And equation $(\ref{eq.hyp})$ says that those points 
of the $(t,r)$-plane, where $r=const.$ correspond to hyperbolas of the $(u,v)$-plane (see figure \ref{fig.kruskal} on page 
\pageref{fig.kruskal}). For $r\ra\sqrt[d-2]{\mu}$, the hyperbolas cling more and more to the bisecting lines, because in this 
limit it holds $u^2=v^2$. Because of $(\ref{eq.line})$ the lines $t=const.$ correspond to the lines through the origin. For 
$t\rightarrow\pm\infty$ it holds $\tanh\rightarrow\pm1$. This limit coincides with $\{r=2m\}$. 
The metric is not defined on the hyperbolas $v^2-u^2=1$, because these points correspond to $r=0$. Is however $v^2-u^2<1$, 
that is $0<r\leq\sqrt[d-2]{\mu}$, then the right hand side of $(\ref{eq.hyp})$ is monotonely increasing and therefore $r$ is 
a well-defined function of $u$ and $v$. This is why $\tilde{f}^2$ cannot be singular at these points. 

The essence of the Kruskal transformation is therefore in particular qualitatively the same in every dimension.
Altogether one can say that the assumption of a static spherical symmetry is very strong and restrictive which is why 
we couldn't observe dimension-dependent behaviour in the hole section. This is very different from axial symmetry.
\begin{figure}[h]
\includegraphics[width=5cm,height=5cm]{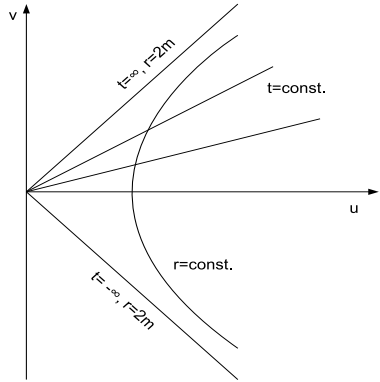}
\caption{The niveaulines $t=const.$ and $r=const.$ in the Kruskal plane.}
 \label{fig.kruskal}
\end{figure}
\section{Axial symmetry: The Myers-Perry metric}
%
\subsection{The Kerr metric}
Before considering the Myers-Perry metric, which is a family of higherdimensional axial symmetric solutions of the vacuum 
equation, let us have a brief look at its fourdimensional counterpart, the Kerr metric. From it, we want to see some crucial 
features, which lay a fundament of what features are to be watched out for in the higherdimensional case.

The Kerr metric can be interpreted as a dynamic generalization of the Schwarzschild metric. As such, it is a good 
model for the gravitational field of a rotating central-symmetric mass distribution. The Kerr metric helps thus realizing 
how spacetime changes due to rotation of mass.  This is an interesting fact, because in the Newtonian view of the world there 
is no distiction of rotating and non-rotating mass distributions.

In Boyer-Lindquist coordinates $(t,r,\vartheta,\varphi)$ (spherical coordinates) the Kerr metric is of the shape
\bdm
g_K = - dt\otimes dt+\frac{2mr}{\rho^2}\left(dt-a\sin^2\vartheta d\varphi\right)^2+\frac{\rho^2}{\Delta_K}dr\otimes dr+\rho^2 
        d\vartheta\otimes\vartheta+(r^2+a^2)\sin^2\vartheta d\varphi\otimes\varphi.
\edm
The functions $\rho$ and $\Delta_K$ are declared in the following way:
\begin{eqnarray*}
\rho^2 &=& r^2+a^2\cos^2{\vartheta},\\
\Delta_K &=& r^2-2mr+a^2.
\end{eqnarray*}
The parameter $m$ can again be interpreted as mass of the gravitating object and again we assume $m>0$ to avoid naked        
singularities. Other than the Schwarzschild metric, the Kerr metric is described by a second parameter, $a$, which can 
be interpreted as angular momentum per mass unit. Setting $a=0$ one obtains the Schwarschild metric. Analogously to the         
theorem of Birkhoff, one can show that the Kerr metric is the unique stationary and axial symmetric solution to the vacuum
equation \cite{heusler}. We will see that, in contrary to spherical symmetry, this feature is no longer valid for 
higher dimensional axial symmetric solutions.

Depending on the parameter $m$ and $a$, one distinguishes three different classes of the Kerr spacetime: 
\begin{eqnarray*}
0<a^2<m^2 & & \text{slowly rotating Kerr spacetime}\\
a^2=m^2   & & \text{extreme Kerr spacetime}\\
m^2<a^2   & & \text{fast rotating Kerr spacetime.}
\end{eqnarray*}

At $\rho^2=0$, $\Delta_K=0$ the Kerr metric is not defined, but it can be shown that the latter is a coordinate singularity. 
Similar to the Tangherlini case, we call the connected components of the point set $\{\Delta_K=0\}$ \emph{horizons}, wherefore 
the function $\Delta_K$ is again called \emph{horizon function}. Analyzing the horizon function, one can see that every class possesses 
a different horizon-structure, for it holds $\Delta_K(r)=0\Leftrightarrow r=r_{\pm}:=m\pm\sqrt{m^2-a^2}$. In the
\begin{itemize}
 \item slowly rotating Kerr spacetime $\Delta_K$ has two positive roots.
 \item extreme Kerr spacetime $r=m$ is a double root of $\Delta_K$.
 \item fast rotating Kerr spacetime $\Delta_K$ possesses no real roots.
\end{itemize}

Other than in the Schwarzschild case, the point set $\{t,r=0,\vartheta,\varphi\}$ only consists of singularities 
if $\vartheta=\frac{\pi}{2}$, because $\rho^2=0\Leftrightarrow \left(r=0\text{ und } \cos\vartheta=0\right)$. We denote this 
singularity by $\Sigma$. We can conclude that $\Sigma=\R(t)\times S^1$ where $S^1$ is the equator of the sphere at $r=0$. 
For this reason, $\Sigma$ is called \emph{ring singularity}. One can show \cite{oneill} that this is a curvature singularity.
Taking $\frac{\pi}{2}$ out of the domainon $\vartheta$, we can assume $r\in\R$.

In this article, we only want to consider the slowly rotating Kerr spacetime. The other two types are contained as 
special cases.
It is practical, to divide the set $\R^2\times S^2-\Sigma$ into so called \emph{Boyer-Lindquist blocks} I, II and III, 
which are defined in the following way by the value of $r$:
\begin{eqnarray*}
	\text{I}   &:&  r>r_+ \\
	\text{II}  &:&  r_-<r<r_+\\
	\text{III} &:&  r<r_-
\end{eqnarray*}

A further interesting aspect is the causality structure of the coordinate vectorfields on the Boyer-Lindquist blocks, which 
will be briefly summarized in the following. Because of $\rho^2>0$ and $\Delta_K>0$ on I and III, but $\Delta_K<0$ on II, it holds
(compare figure \ref{fig.KerrKausal}):\\

\begin{itemize}
  \item $\partial_r$ is spacelike on I and III, timelike on II.
	\item $\partial_\vartheta$ is spacelike everywhere.
	\item $\partial_\varphi$ is spacelike, if $r>0$ that means in any case on the blocks I and II, but also if $r\ll -1$.
	      Because then $r^2+a^2>\frac{2m|r|a^2\sin^2\vartheta}{r^2+a^2\cos^2\vartheta}$. That means $\partial_\varphi$ is 
	      spacelike only in some (negative) distance to the ringsingularity.
	\item $\partial_t$ is spacelike on II, because $g_{tt}>0\Leftrightarrow a^2\cos^2\theta<2mr-r^2$, which is fullfilled on the
	      open interval $(r_-,r_+)$, because of $2mr-r^2([r_-,r_+])=(a^2,m^2]$. Likeweise one realizes that $\partial_t$ is 
	      timelike for $r>2m$ and $r<0$.\\
\end{itemize}
For $r$ big enough that is $r>2m$, then the Boyer-Lindquist coordinates can be classically interpreted as time, distance from 
the rotating object, latitude and longitude. On block II however, $\partial_t$ and $\partial_r$ exchange their role, for 
$\partial_r$ now measures temporal and $\partial_t$ measures spatial distances, analogous to the situation in the interior of 
the Schwarzschild sphere.                                                                                                   
On block III for $r\ll-1$ the coordinates behave classically again, with the difference that now $-r$ measures the distance to 
the rotating massdistribution.
While $\partial_\vartheta$ and $\partial_r$ have constant causal character on each block, $\partial_\varphi$ and $\partial_t$ 
don't behave that clearly arranged. Those regions in the blocks I and II, on which $\partial_t$ is spacelike, are in each case 
called \emph{ergosphere}. In these regions interesting physical effects can be observed, which we won't deepen here.
\begin{figure}[h]
 \centering
 \includegraphics[width=7cm]{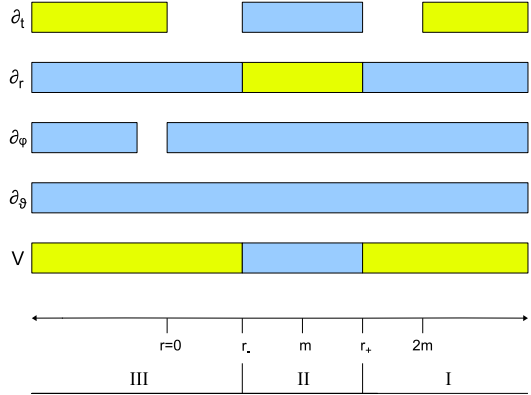}
 \caption{ The causal behaviour of the coordinate vectorfields at a glance. A yellow bar indicates timelike, a blue bar 
           spacelike behaviour.}
 \label{fig.KerrKausal}
\end{figure}
\subsection{The Myers-Perry metric}

The first property of higher dimensional axial symmetry is that there is no unique stationary solution like we have seen 
in the fourdimensional case. As an example of an axial symmetric solution we want to consider the Myers-Perry metric, which 
can be seen as a direct generalization of the Kerr metric. Other than the spherical symmetry, which is very restrictive and 
thus doesn't admit qualitatively new solutions in higher dimensions, we will discover a highly dimension-dependend behaviour 
of the Myers-Perry metric. Essential influence on the metric of a $d+1$ dimensional axial symmetric spacetime comes from the
$\left\lfloor\frac{d}{2}\right\rfloor$ possible rotationplanes, to each one can associate an angular 
momentum $J_i$. To make the qualitative behaviour of the solution more understandible, we proceed like \cite{myersperry} 
and perform the generalization in two steps and begin with rotation in just one plane.

\subsubsection{Rotation in one plane}
Considering rotation in just one plane, the Myers-Perry metric is of the shape
\begin{eqnarray*}
g_{MP1} &=& -dt\otimes dt+\frac{\mu}{r^{d-4}\rho^2}(dt-a\sin^2d\varphi)^2+\frac{\rho^2}{\Delta_{MP1}}dr\otimes dr+ \rho d\vartheta\otimes d\vartheta\\
        & & +(r^2+a^2)\sin^2\vartheta d\varphi^2+r^2\cos^2\vartheta g_{S^{d-3}},
\end{eqnarray*}
where the functions $\rho$ and $\Delta_{MP1}$ are declared analogous to the Kerr metric as 
\begin{gather*}
\rho^2=r^2+a^2\cos^2\vartheta,\ \ \ \Delta_{MP1}=r^2+a^2-\frac{\mu}{r^{d-4}}.
\end{gather*}
Comparison with the far field gives the integration constants $\mu$ and $a$ as \emph{mass-parameter} and \emph{angular momentum
per mass unit} respectively,
\begin{gather*}
\mu=\frac{4\pi m}{(d-1)\Omega_{d-1}}, \ \ \ a=\frac{J(d-1)}{2m}.
\end{gather*}
We will again assume $\mu$ to be positive. One realizes at once that for $d=3$ one obtains the Kerr metric. "Stopping" 
rotation, i.e. setting $a=0$, it yields the Tangherlini metric.

$g_{MP1}$ is singular on the sets $\{\Delta_{MP1}=0\}$ and $\{r^{d-4}\rho^2=0\}$. Because the first set is a purely coordinate 
singularity, we again call $\Delta_{MP1}$ the horizon function. Section \ref{subsec.horizon} will give a comparison of the different 
horizon functions that appear in this article. In contrary to that, the second set is a curvature singularity \cite{myersperry}.
To study the structure of the singularities, it is convenient to distinguish between $d=3$, $d=4$ and $d\geq5$ (compare table 
\ref{tab.sing}). We have already studied the case $d=3$ in the previous section.

If $d=4$, the requirement of the set $\{r^{d-4}\rho^2=0\}$ reduces to  $\rho^2=0$ and gives a ringsingularity at $r=0$ similar 
to the Kerr case. Because of this, $r$ is again defined on $\R$. The equation $\Delta_{MP1}=r^2+a^2-\mu=0$ can be solved easily by 
$r=\pm\sqrt{\mu-a^2}$ and there exist thus two horizons, if $a^2\neq\mu$. Obviously, real solutions only exist for values of $a^2$, 
which are smaller than $\mu$. In the extreme case $a^2=\mu$ the ringsingularity lies within the horizon. Is the value of 
$a^2>\mu$, then there is a naked singularity present. For a horizon to exist, the angular momentum is thus not allowed to take 
an arbitrary high value.

If $d\geq5$, the metric is singular at all points, whose $r$-coordinate is zero. This corresponds to a (in time moving) 
$(d-1)$-sphere. To get the position of the horizons, an equation of the form $r^2+a^2-\frac{\mu}{r^k}=0$ for a $k>0$ is to 
be solved, wich is equivalent to $r^{2+k}+a^2r^k=\mu$. This equation has a unique solution for $r>0$, for the function on the
left hand side is continuous and monotoneously increasing and it has the value zero for $r=0$. In particular, the existence of 
a solution is independent of the value of $a$; therefore there are also horizons for arbitrary large $a$ (which is different from
the spatial dimensions $3$ and $4$).

It appears that the dimensions $d+1=4$ and $d+1=5$ are somehow special in the Myers-Perry spacetime. But as we will realize in the next 
subsection, this feature just reflects the number of rotation planes. For $d\geq5$, one rotation plane is too little to 
cause interesting behaviour of the black hole. 

\begin{table}[h]
 \begin{tabular}{p{0.07\textwidth}|p{0.35\textwidth}|p{0.15\textwidth}|p{0.2\textwidth}|p{0.1\textwidth}}
           & \textbf{Number of horizons} & \textbf{Restriction to angular momentum} & \textbf{Type of the curvature singularity}
           &  \textbf{Domain of $r$}\\
    \hline\\
    $d=3$  & $1-2$, for $r=m\pm\sqrt{m^2-a^2}$   & $a^2\leq\frac{1}{4}\mu^2$ & Ring singularity & $r\in\R$ \\
    $d=4$  & $1-2$, for $r=\pm\sqrt{\mu-a^2}$    & $a^2\leq\mu$              & Ring singularity & $r\in\R$\\
    $d\geq5$ & $1$, for $r^{2+k}+a^2r^k=\mu$     & $a\in\R$                  & Point singularity& $r\in\R^+$\\
    \hline
 \end{tabular}
 \caption{Tabular overview of the characteristics of the different dimensions in the Myers-Perry metric.}
 \label{tab.sing}
\end{table}
It is also interesting to look at the causal character of the coordinate vectorfields, which is what we want to do now 
(compare also figures \ref{fig.MP4Kausal} and \ref{fig.MPdKausal}). For this let $d>3$.

First we analyze $\partial_r$. It holds 
\begin{eqnarray}
g_{rr}=\frac{\rho^2}{\Delta_{MP1}}\gtrless0\Longleftrightarrow \Delta_{MP1}=r^2+a^2-\frac{\mu}{r^{d-4}}\gtrless0.
\end{eqnarray}
For $d=4$ this condition is fullfilled, iff $r^2\gtrless\mu-a^2$. Outside the horizons, i.e. for $r>\sqrt{\mu-a^2}$ and 
$r<-\sqrt{\mu-a^2}$, $\partial_r$ is thus spacelike, within the horizons, $\partial_r$ is timelike.

Because for $d>4$ the $r$-component is positive, $(1)$ is equivalent to $r^{2+k}+ar^k\gtrless\mu$, if $k=d-4$. For $d>4$ 
the causal behaviour of $\partial_r$ is thus analogue to that for $d=4$.

Consider now $\partial_{\varphi}$. It holds
\bdm
g_{\varphi\varphi}=\left(r^2+a^2+\frac{\mu a^2\sin^2\vartheta}{r^{d-4}\rho^2}\right)\sin^2\vartheta>0
\edm
for all values of $r$, $\vartheta$ and $d$. This means that $\partial_{\varphi}$ is always spacelike.

Next we consider $\partial_t$. It holds
\bdm
g_{tt}=-\left(1-\frac{\mu}{r^{d-4}\rho^2}\right)\gtrless0\Longleftrightarrow \frac{\mu}{r^{d-4}\rho^2}\gtrless 1
\Longleftrightarrow \mu\gtrless r^{d-4}\rho^2,
\edm
because for $d\neq4$ always $r>0$. For $d=4$ the requirement reduces to $r^2\lessgtr\mu-a^2\cos^2\vartheta$. $\partial_t$
is thus timelike if $r>\sqrt{\mu-a^2\cos^2\vartheta}$ or if $r<-\sqrt{\mu-a^2\cos^2\vartheta}$,                        
that means in any case for $r>\sqrt{\mu}$ and $r<-\sqrt{\mu}$. $\partial_t$ is spacelike, if $r<\sqrt{\mu-a^2\cos^2{\vartheta}}$ 
and $r>0$, or $r>-\sqrt{\mu-a^2\cos^2{\vartheta}}$ and $r<0$ that means in any case for $r\in(-\sqrt{\mu-a^2},0)$ and 
$r\in(0,\sqrt{\mu-a^2})$. In the areas $\sqrt{\mu-a^2}< r<\sqrt{\mu}$ and $-\sqrt{\mu}<r<-\sqrt{\mu-a^2}$ the causal character of 
$\partial_t$ depends on $\vartheta$, similar to the ergosphere in the Kerr spacetime.

For $d>4$ there exists a number $k>0$, such that above condition can be refomulated as $\mu\gtrless r^k\left(r^2+a^2\cos^2\vartheta\right)$.
For values of $r$, for which $\mu>r^{2+k}+a^2r^k$ holds, i.e. within the horizon, $\partial_t$ is spacelike. For values of $r$, for 
which $\mu<r^{2+k}$ holds, $\partial_t$ is timelike. Within the area $r^{2+k}<\mu<r^{2+k}+a^2r^k$ the causal character again depends 
on the angle $\vartheta$.

The remaining coordinate vector fields are spacelike everywhere.

\begin{figure}[h]
 \centering
 \includegraphics[width=7cm]{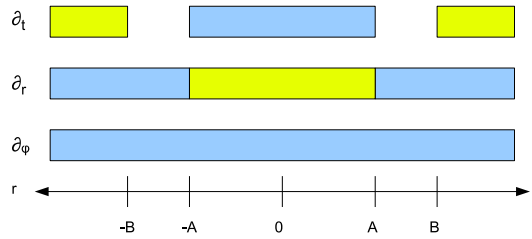}
 \caption{The causal behaviour of the coordinate vector fields of the fourdimensional Myers-Perry metric. A yellow bar 
          indicates timelike, a blue bar indicates spacelike behaviour. Here, $A:=\sqrt{\mu-a^2}$ und $B:=\sqrt{\mu}$.}
 \label{fig.MP4Kausal}
\end{figure}
\begin{figure}[h]
 \centering
 \includegraphics[width=7cm]{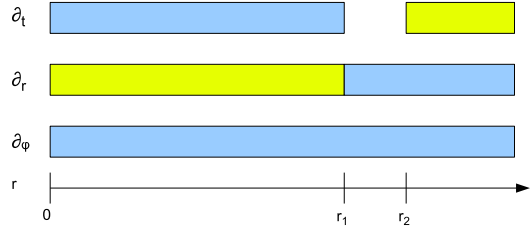}
 \caption{The causal behaviour of the coordinate vector fields of the Myers-Perry metric for $d\geq5$. A yellow bar 
          indicates timelike, a blue bar indicates spacelike behaviour. Here, $\mu=r_1^{2+k}+a^2r_1^k$ and 
          $r_2:=\sqrt[2+k]{\mu}$.}
 \label{fig.MPdKausal}
\end{figure}
\subsubsection{The general Myers-Perry metric}
%
In 1986 Myers and Perry found in \cite{myersperry} a class of spacetimes which admits rotations in any 
$N=\left\lfloor \frac{d}{2}\right\rfloor$ independent rotationplanes ($d+1$ is again the dimension of the spacetime).
It is not very surprising that within this class there is a distinction between odd and even dimension number $d$.
We will start looking at the Myers-Perry metric in its full generality and then treat the special 
case $d+1=5$ with $N=2$ independent rotation planes.

We will first introduce polar coordinates for every rotation plane: For $\{x_0,x_i\}$,$i=1,...,d$, the cartesian coordinates 
of the spacetime, the rotation planes are given by $(x_{2a-1},x_{2a})=(r_a cos\varphi_a,r_a sin\varphi_a)$, for $a=1,...N$.
Is $d+1$ an odd number, we denote the residual coordinate with $\alpha$. Furthermore let $r:=\sum_{i=1}^{d}\sqrt{x_i x_i}$ 
and we define $\mu_a=\frac{r_a}{r}$ as new coordinate function. It then holds either $\sum_{a=1}^N{\mu_a^2}=1$ or 
$\sum_{a=1}^N{\mu_a^2}+\alpha^2=1$. The coordinate $\mu_a$ is not to be confused with the mass-parameter $\mu$.

For $d+1$ odd, the general Myers-Perry metric is 
\begin{eqnarray*}
g_{MP}^o &=& -dt\otimes dt\\
         & & +\sum_{i=1}^N \left((r^2+a_i^2)(d\mu_i\otimes d\mu_i+\mu_i^2d\varphi_i\otimes d\varphi_i)
             +\frac{\mu r^2}{\Pi F}(dt-a_i\mu_i^2d\varphi_i)^2+\frac{\Pi F}{\Pi-\mu r^2}dr\otimes dr\right),
\end{eqnarray*}
for $d+1$ even, the corresponding metric is
\begin{eqnarray*}
g_{MP}^e &=& -dt\otimes dt+r^2d\alpha\otimes d\alpha\\
         & & +\sum_{i=1}^N\left((r^2+a_i^2)(d\mu_i\otimes d\mu_i+\mu_i^2d\varphi_i\otimes d\varphi_i)+
             \frac{\mu r}{\Pi F}(dt-a_i\mu_i^2d\varphi_i)^2+\frac{\Pi F}{\Pi-\mu r}dr\otimes dr\right).
\end{eqnarray*}
In both formulas the functions $F=F(r,\mu_i)$ and $\Pi=\Pi(r)$ are defined in the following way:
\begin{eqnarray*}
F(r,\mu_i)=1-\frac{a_i^2\mu_i^2}{r^2+a_i^2},\ \ \ \Pi(r)=\Pi_{i=1}^N(r^2+a_i^2).
\end{eqnarray*}
The integration constants $\mu>0$ and $a_i$ can again be associated with the mass of the rotating object and the particular 
angular momentum respectively. Note additionally that $dr$ und $d\mu_i$ aren't linearly independent, because $\mu_i=\frac{r_i}{r}$.
The both first terms of the big sum describe the behaviour of the metric on the rotation planes. Because of the fact that
the function $\Pi$ is in the denominator of the second term, it seems, as if the metric restricted on one rotationplane is 
not independent of the rotational behaviour on the other planes. The roots of the last term are coordinate singularities 
\cite{myersperry}, and for them we again want to bring up the name "horizons".

The vector fields $\partial_t$ and $\partial_{\varphi_i}$ are Killingfields, which means that the Myers-Perry solutions 
are invariant under timetranslations and under rotations along the integral curves of $\partial_{\varphi_i}$. These symmetries
build an isometry group isomorphic to $\R\times\U(1)^N$. Reducing the rotations to just one plane, we see that $g_{MP1}$ 
possesses an $\R\times\U(1)\times\SO(d-2)$ symmetry. For further discussions about the symmetries of the Myers-Perry metric, 
see \cite{emprl}. One can show \cite{myersperry} that one obtains the Kerr metric setting $d=3$. Is $a_i=0$ for all $i$ 
except for one, one can find appropriate coordinate transformations, such that the general solution $g_{MP}$ reduces to $g_{MP1}$.

\subsubsection{Horizons in Myers-Perry spacetime}
%

In the above coordinates the components of the Myers-Perry metric are singular exactly for those values of $r$ for which 
$\frac{\mu r^2}{\Pi F}=\infty$, or $\frac{\mu r}{\Pi F}=\infty$ and $\frac{\Pi F}{\Pi-\mu r^2}=\infty$, or 
$\frac{\Pi F}{\Pi-\mu r}=\infty$. The first of each case are exactly the curvature singularities, which will not be discussed  
here. For further information on that aspect see \cite{myersperry}. In this subsection we want to study the horizons of the 
Myers-Perry spacetime, which are again given by the roots of the denominator of the $rr$-components of the metric that means 
by the equation 
\bdm
\Delta_{MP}^e:=\Pi-\mu r^2=0,
\edm
if $d$ is even. We won't consider the case where $d$ is odd.\\
Because $\Pi=\Pi_{i=1}^N(r^2+a_i^2)$, the left hand side of the above equation is a polynomial of degree $d$ in $r$ and it is 
therefore not solvable with the help of a general formula. The question is now which conditions have to be fullfilled by
the $a_i^2$ to admit a horizon. A first general statement comes from the following lemma. Henceforth let $X_i:=a_i^2$.
\begin{lem}
There exists no value of $r$ for which every value of $a_i^2$ admits a horizon.
\end{lem}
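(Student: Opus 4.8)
The plan is to read the lemma in the way the surrounding discussion suggests: in the one-plane case with $d\geq5$ a horizon exists for \emph{every} value of the angular momentum, and the lemma says that this breaks down once several rotation planes are present, i.e.\ one can make the $a_i^2$ so large that $\Delta_{MP}^e$ has no admissible root at all, so that a fortiori no fixed radius $r$ is a horizon for all values of the $a_i^2$. Recall that $\Delta_{MP}^e(r)=\prod_{i=1}^N(r^2+a_i^2)-\mu r^2$, with $N=d/2\geq2$ since $d$ is even, and $\mu>0$. As this polynomial is even in $r$, I would pass to $s:=r^2\geq0$ and study
\[
q(s):=\prod_{i=1}^N(s+X_i)-\mu s ,\qquad X_i=a_i^2\geq0 ;
\]
a horizon is present precisely when $q$ has a root $s\geq0$, so it is enough to exhibit one tuple $(X_1,\dots,X_N)$ for which $q>0$ on $[0,\infty)$.

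First I would take $X_1=\dots=X_N=c$ with $c>0$ large, so that $q(s)=(s+c)^N-\mu s$. Then $q(0)=c^N>0$ and
\[
q'(s)=N(s+c)^{N-1}-\mu ,\qquad q''(s)=N(N-1)(s+c)^{N-2}\geq0 \text{ on } [0,\infty),
\]
the last inequality using $N\geq2$. Hence $q'$ is non-decreasing on $[0,\infty)$, so once $c$ is large enough that $q'(0)=Nc^{N-1}-\mu>0$ we get $q'>0$ there; then $q$ is strictly increasing and $q(s)\geq q(0)=c^N>0$ for all $s\geq0$. Thus for these $a_i$ the horizon function has no root, which is the lemma. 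The threshold can be made explicit: from $q(s)=(s+c)(s+c)^{N-1}-\mu s\geq\mu(s+c)-\mu s=\mu c>0$ one sees that any $c$ with $c^{N-1}\geq\mu$, i.e.\ $c\geq\mu^{1/(N-1)}$, already works; for $d=4$ this reads $a_i^2\geq\mu$.

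I do not expect a real obstacle here, as the statement is elementary once it is correctly interpreted. The points that need attention are: (i) fixing the precise content, namely that a horizon need \emph{not} exist for arbitrary angular momenta, in contrast with the one-plane $d\geq5$ case; (ii) using the hypothesis that $d$ is even, hence $N\geq2$, which is exactly what makes $\prod_i(s+X_i)$ superlinear in $s$ so that the term $-\mu s$ cannot produce a sign change when the $X_i$ are large; and (iii) the harmless passage $r\mapsto s=r^2$, justified because $\Delta_{MP}^e$ is even in $r$. For completeness one should also note that the remaining boundary value is $s=0$, where $\Delta_{MP}^e=\prod_iX_i$, nonzero whenever no $a_i$ vanishes, so nothing is gained there either.
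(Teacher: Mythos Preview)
Your argument is correct and in fact establishes more than the lemma asks for: you produce explicit values of the $a_i^2$ (namely $a_i^2=c$ with $c^{N-1}\geq\mu$) for which $\Delta_{MP}^e$ has \emph{no} root at all in $r$, and then observe that the lemma follows a fortiori. The paper, by contrast, proves only the literal statement --- that no single $r$ can be a common root for all choices of the $X_i$ --- by an algebraic detour: it writes $\Pi-\mu r^2$ in terms of the elementary symmetric polynomials $S_1,\dots,S_N$ in the $X_i$, notes that the resulting expression is $g(S_1,\dots,S_N)$ for a nonzero linear polynomial $g\in\R[Y_1,\dots,Y_N]$, and invokes the algebraic independence of the $S_i$ to conclude that this cannot vanish identically in the $X_i$.

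Your approach is more elementary and more informative (it gives the quantitative threshold and foreshadows the rhombus picture that appears just after the lemma for $d=4$), but it genuinely uses $N\geq2$ to make $\prod_i(s+X_i)$ superlinear in $s$; the paper's algebraic argument works uniformly in $N$, though at the cost of invoking machinery that is heavier than needed --- one could equally well just point to the monomial $X_1\cdots X_N$ in $\Pi-\mu r^2$, whose coefficient is $1$.
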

\begin{proof}
Let $S_i\in\R[X_1,...,X_N]$ denote the elementary symmetric polynomials in $X_1,...,X_N$. Then we have                  
\begin{eqnarray*}
 \Pi-\mu r^2 &=& \prod_{i=1}^N(r^2+X_i)-\mu r^2 = r^{2N}+r^{2(N-1)}(\underbrace{X_1+...+X_N}_{=:S_1})\\
 & &  + r^{2(N-2)}(\underbrace{X_1X_2+X_1X_3+...+X_{N-1}X_N}_{=:S_2})+...+\underbrace{X_1\cdot...\cdot X_N}_{=:S_N}-\mu r^2\\
 &=& r^{2N}+r^{2(N-1)}S_1+r^{2(N-2)}S_2+...+S_N-\mu r^2.
\end{eqnarray*}                                     
Defining $g:=r^{2N}+r^{2(N-1)}Y_1+r^{2(N-2)}Y_2+...+Y_N-\mu r^2 \in \left(\R[Y_1,...,Y_N]\right)[\mu,r]$, it holds 
$g(S_1,...,S_N)=\Pi-\mu r^2$. Assuming there is a $(r,\mu)\in\R^2$ such that $\Pi-\mu r^2=0$, then it would be 
$g(S_1,...,S_N)=0$. This cannot be, because the elementary symmetric polynomials are algebraically independent over $\R$.
\end{proof}
This statement doesn't seem to be very surprising, especially as we could make the same statement for all the other spacetimes 
we discussed before with more elementary calculations. Nevertheless, there are examples for polynomials which are algebraically 
independent, or equivalently for general polynomial expressions that become the zero polynomial after choosing the coefficients 
appropriately. A simple example is $p(X)=aX-a^2X$. Choosing $a=1$ yields $p(X)=0$.

Whether there exists a horizon at $r=r_0$ thus depends on the choice of the $a_i$. For the sake of simplification, we want 
to analyze this dependence only for the case $d=4$. It then holds
\bdm
\Delta_{MP}^e=r^4+r^2(X_1+X_2-\mu)+X_1X_2,
\edm
which is a quadratic polynomial in $r^2$ and is therefore easily solvable: The zeros are 
\bdm
2r_{1,2}^2=\mu-X_1-X_2\pm\sqrt{(\mu-X_1-X_2)^2-4X_1X_2}.
\edm
For these solutions to be real that means for horizons being possible, the condition
\bdm
\mu\geq a_1^2+a_2^2+2|a_1a_2|=(|a_1|+|a_2|)^2
\edm
has to be fullfilled. The allowed values for the angular momenta are thus bounded and lie within a rhombus (compare Figure 
\ref{fig.phase}). Is this condition fullfilled, two horizons exist because of $\mu-X_1-X_2>\sqrt{(\mu-X_1-X_2)^2-4X_1X_2}$, 
for positive $X_1$ and $X_2$. If there is no rotation in one plane that means, if $X_i=0$ for one $i$, one obtains the zeros 
of $\Delta_{MP1}$.
\begin{figure}[h]
 \centering
 \includegraphics[width=6cm]{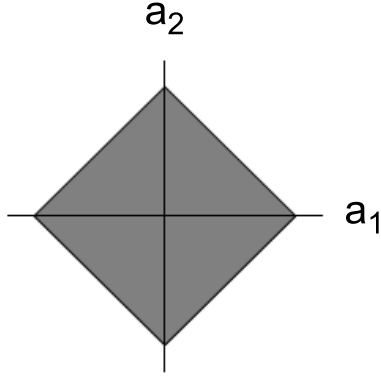}
 \caption{Sketch of the phase space of a fivedimensional rotating Myers-Perry black hole.}
 \label{fig.phase}
\end{figure}

The following lemma makes a geometric statement about the dependence of the zeroes on the $X_i$.
\begin{lem}\label{lem.quadrik}
The set of all $(X_1,X_2)\in\R^2$ wich satisfy the equation $\Delta=(r^2+X_1)(r^2+X_2)-\mu r^2=0$ for a given 
$\mu,r\in\R\setminus\{0\}$, form a hyperbola. For $r=0$ or $\mu=0$, this set forms an intersecting pair of straight lines.
\end{lem}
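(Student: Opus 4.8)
The plan is to reduce the defining equation to a normal form by an affine change of coordinates and then invoke the elementary classification of plane conics. First I would expand
\[
\Delta = (r^2+X_1)(r^2+X_2)-\mu r^2 = X_1X_2 + r^2X_1 + r^2X_2 + r^4 - \mu r^2 ,
\]
and note that the purely quadratic part is just $X_1X_2$. Its discriminant is $1>0$, so in any case the zero set of $\Delta$ is a conic of \emph{hyperbolic type}; it remains only to decide when it is non-degenerate.

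The key step is the translation $Y_1 := r^2 + X_1$, $Y_2 := r^2 + X_2$. This is an invertible affine map of $\R^2$, hence it sends hyperbolas to hyperbolas and pairs of intersecting lines to pairs of intersecting lines, so it suffices to analyse the image equation, which is simply
\[
Y_1 Y_2 = \mu r^2 .
\]
If $\mu\neq 0$ and $r\neq 0$, then $\mu r^2\neq 0$, and after the rotation $u=(Y_1+Y_2)/\sqrt2$, $v=(Y_1-Y_2)/\sqrt2$ this becomes $u^2-v^2 = 2\mu r^2$, the standard equation of a (rectangular) hyperbola — indeed, since $\mu>0$ and $r^2>0$ the right-hand side is strictly positive, so one obtains the full, genuine hyperbola. (Equivalently, one may check that the $3\times 3$ matrix of the conic has determinant equal to a nonzero multiple of $\mu r^2$, so the conic is non-degenerate, and a non-degenerate conic of hyperbolic type is a hyperbola.) If instead $r=0$ or $\mu=0$, then $\mu r^2=0$ and the equation $Y_1Y_2=0$ describes exactly the union of the two distinct lines $\{Y_1=0\}$ and $\{Y_2=0\}$, meeting at the origin of the $(Y_1,Y_2)$-plane; pulling back by the translation, this is a pair of intersecting straight lines in the $(X_1,X_2)$-plane — the coordinate axes when $r=0$, and the lines $X_1=-r^2$, $X_2=-r^2$ when $\mu=0$.

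There is essentially no serious obstacle here: the argument is a one-line expansion followed by a translation, and the only point requiring a little care is to confirm non-degeneracy in the generic case, so that ``hyperbolic type'' genuinely produces a hyperbola rather than, accidentally, a degenerate conic. That is settled either by the explicit reduction to $u^2-v^2=2\mu r^2$ or by the determinant criterion, and it is also worth recording that the standing assumption $\mu>0$ guarantees $\mu r^2>0$ whenever $r\neq 0$, so the generic locus is always a bona fide hyperbola.
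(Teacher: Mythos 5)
Your proof is correct, and it reaches the same normal form as the paper does ($u^2-v^2=2\mu r^2$ versus the paper's $\tfrac12 Z_1^2-\tfrac12 Z_2^2=\mu r^2$), but by a genuinely shorter route. The paper runs the full principal-axis algorithm: it writes the quadratic part as $\bigl(\begin{smallmatrix}0&1/2\\1/2&0\end{smallmatrix}\bigr)$, computes eigenvalues $\pm\tfrac12$ and eigenvectors, rotates by $45^\circ$, and only then completes the square by a translation. You instead observe that the defining equation is already a product of affine-linear forms, so the single translation $Y_i=r^2+X_i$ reduces it at once to $Y_1Y_2=\mu r^2$, which is the rectangular hyperbola $xy=c$ (or, for $c=0$, the union of the two lines $X_1=-r^2$ and $X_2=-r^2$); the $45^\circ$ rotation is then only needed if one insists on the form $u^2-v^2=\mathrm{const}$. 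This buys a two-line proof and makes the degenerate case transparent, at the cost of not exhibiting the general reduction machinery the paper's corollary reuses (the coordinates $Z_1,Z_2$ reappear in the formula $r=\pm\frac{1}{\sqrt{2\mu}}\sqrt{Z_1^2-Z_2^2}$, so the paper has a reason to record the rotated frame explicitly). One small remark: your parenthetical appeal to "the standing assumption $\mu>0$" is unnecessary and slightly at odds with the lemma's hypothesis $\mu\in\R\setminus\{0\}$ --- the paper's own remark immediately after the lemma considers $\mu<0$ --- but this is harmless, since $u^2-v^2=2\mu r^2$ is a hyperbola for either sign of the nonzero right-hand side.
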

\begin{rem}
In conrary to a rhombus, hyperbolas are unbounded. This unboundedness originates from the fact that we also admitted negative
values of $X_i=a_i^2$ in the lemma. 
\end{rem}
\begin{proof}
\begin{eqnarray*}
(r^2+X_1)(r^2+X_2)-\mu r^2 &=& r^2(X_1+X_2)+X_1X_2+r^2(r^2-\mu)\\
                           &=& \begin{pmatrix} X_1&X_2 \end{pmatrix}
                               \underbrace{\begin{pmatrix} 0&\frac{1}{2}\\ \frac{1}{2}&0 \end{pmatrix}}_{=:A}                               
                               \begin{pmatrix} X_1\\X_2 \end{pmatrix} +
                               \begin{pmatrix} r^2&r^2  \end{pmatrix}
                               \begin{pmatrix} X_1\\X_2 \end{pmatrix} +r^2(r^2-\mu)                               
\end{eqnarray*}
We will now bring this quadric into normal form.

\emph{Step One: Determination of the eigenvalues and the corresponding eigenspaces.}\\
$det(A-\lambda Id)=\lambda^2-\frac{1}{4}=0 \Longleftrightarrow \lambda=\pm\frac{1}{2}$.\\
$(A-\frac{1}{2}Id)X=0\Longleftrightarrow X\in \{(X_1,X_2)\in\R^2:\ X_1=X_2\}$.\\
$(A+\frac{1}{2}Id)X=0\Longleftrightarrow X\in \{(X_1,X_2)\in\R^2:\ X_1=-X_2\}$.\\
The eigenspaces to the both eigenvalues $\lambda=\pm\frac{1}{2}$ of $A$ are therewith
\begin{eqnarray*}
Eig_A\left(\frac{1}{2}\right) &=& span\left\{\frac{1}{\sqrt{2}}\begin{pmatrix}1\\1\end{pmatrix}\right\},\\
Eig_A\left(-\frac{1}{2}\right)&=& span\left\{\frac{1}{\sqrt{2}}\begin{pmatrix}1\\-1\end{pmatrix}\right\}.
\end{eqnarray*}
With the help of the both stated eigenvectors, one obtains the matrix for the change of the basis
\bdm
B=
\begin{pmatrix}
\frac{1}{\sqrt{2}}&\frac{1}{\sqrt{2}}\\
\frac{1}{\sqrt{2}}&-\frac{1}{\sqrt{2}}
\end{pmatrix}.
\edm

\emph{Step Two: Transformation of the quadric with respect to the new basis.}\\
In the new basis, $A$ has the form
\bdm
B^TAB=\begin{pmatrix}\frac{1}{2}&0\\0&-\frac{1}{2}\end{pmatrix}.
\edm
With this in the new basis, the quadric has the form
\begin{eqnarray*}
0 &=& \begin{pmatrix} Y_1&Y_2 \end{pmatrix}\begin{pmatrix}\frac{1}{2}&0\\0&-\frac{1}{2}\end{pmatrix}
      \begin{pmatrix} Y_1\\Y_2 \end{pmatrix}+\begin{pmatrix}\sqrt{2}r^2&0\end{pmatrix}\begin{pmatrix}Y_1\\Y_2\end{pmatrix}
      +r^2(r^2-\mu)\\
  &=& \frac{1}{2}Y_1^2-\frac{1}{2}Y_2^2+\sqrt{2}r^2Y_1+r^2(r^2-\mu).
\end{eqnarray*}
\emph{Step Three: Translation of the origin.}\\
We finally perform the substitution 
\bdm
\begin{pmatrix} Z_1\\Z_2 \end{pmatrix} =
\begin{pmatrix} Y_1+\sqrt{2}r^2\\ Y_2 \end{pmatrix}
\edm
and obtain for the quadric the equation
\bdm
\frac{1}{2} Z_1^2-\frac{1}{2}Z_2^2-r^2\mu=0,
\edm
from which the statement follows.
\end{proof}
\begin{rem}
For the previously discussed metrics we excluded negative values for the mass parameter, because otherwise the spacetimes 
would have had naked curvature singularities, which we wanted to exclude because of the cosmic censorship hypothesis. This
hypothesis says that no naked singularities exisct, except for the bigbang singularity. But surprisingly in the case of two 
rotation planes in fivedimensional Myers-Perry spacetime, there exist horizons at $r>0$ for negative values of $\mu$.
\end{rem}
\begin{cor}
Using the form of the horizon function of the previous proof, one gains a more elegant formula for the roots:
\bdm
r=\pm\frac{1}{\sqrt{2\mu}}\sqrt{Z_1^2-Z_2^2}.
\edm
In particular, the condition $Z_1^2>Z_2^2$ is necessary for the existence of a horizon at $r\neq0$.
\end{cor}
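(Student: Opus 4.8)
The plan is simply to read off the root directly from the normal form of the horizon quadric that was already obtained in the proof of Lemma \ref{lem.quadrik}. Recall that there the equation $\Delta = (r^2 + X_1)(r^2 + X_2) - \mu r^2 = 0$, with $X_i = a_i^2$, was brought --- via the orthogonal change of basis $B$ and then the translation $(Z_1, Z_2) = (Y_1 + \sqrt{2}\,r^2,\, Y_2)$ --- into the shape
\bdm
\tfrac{1}{2}Z_1^2 - \tfrac{1}{2}Z_2^2 - r^2\mu = 0 .
\edm
Since $Z_1$ and $Z_2$ are explicit affine-linear functions of $X_1, X_2$ (and of $r$), this is now an equation that is linear in the single unknown $r^2$, which is the whole point of having passed to the $Z$-coordinates.

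First I would isolate $r^2$: the displayed identity gives $r^2\mu = \tfrac{1}{2}(Z_1^2 - Z_2^2)$, hence
\bdm
r^2 = \frac{Z_1^2 - Z_2^2}{2\mu} .
\edm
Taking the real square root, and recalling the standing assumption $\mu > 0$, yields $r = \pm \frac{1}{\sqrt{2\mu}}\sqrt{Z_1^2 - Z_2^2}$, which is the asserted formula; the two signs match the observation, already made in the discussion following Lemma \ref{lem.quadrik}, that horizons occur in a $\pm r$ pair.

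The necessity of the inequality then falls out of the same formula. For the right-hand side above to be a real number different from $0$ one needs $Z_1^2 - Z_2^2 > 0$ (here $\mu > 0$ is used again), i.e. $Z_1^2 > Z_2^2$; if $Z_1^2 = Z_2^2$ the only solution is $r = 0$, which is the curvature-singularity value rather than a horizon, while if $Z_1^2 < Z_2^2$ no real root exists at all. I do not expect any genuine obstacle in this argument --- the single point worth flagging is the dependence on the sign of $\mu$: were one to drop the convention $\mu > 0$ (compare the remark on negative mass immediately preceding this corollary), the inequality would merely reverse to $Z_1^2 < Z_2^2$, which is precisely the regime in which that remark locates horizons at $r > 0$.
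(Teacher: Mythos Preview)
Your derivation is correct and coincides with the paper's approach: the corollary there carries no separate proof, being an immediate rearrangement of the normal form $\tfrac{1}{2}Z_1^2 - \tfrac{1}{2}Z_2^2 - r^2\mu = 0$ obtained at the end of the proof of Lemma~\ref{lem.quadrik}, exactly as you do.

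One small caveat worth flagging: your aside that the equation becomes ``linear in the single unknown $r^2$'' is not accurate, since the translation $Z_1 = Y_1 + \sqrt{2}\,r^2$ makes $Z_1$ itself depend on $r^2$; expanding $Z_1^2 - Z_2^2$ in the $X_i$ simply reproduces the original quartic $r^4 + (X_1 + X_2 - \mu)r^2 + X_1X_2$. The displayed formula is therefore a compact repackaging of the horizon equation rather than a closed-form solution in independent data. This does not affect the algebraic step you actually perform, nor the necessity of $Z_1^2 > Z_2^2$ for a real nonzero root when $\mu > 0$, but you should drop the ``linear'' claim.
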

\begin{rem}
Choosing $\mu$ to be negative in the previous corollary, a real solution is produced by extracting the factor $\sqrt{-1}$
out of $\sqrt{Z_1^2-Z_2^2}$ and demanding $Z_2^2>Z_1^2$.
\end{rem}
\subsection{Discussion of the horizon functions}\label{subsec.horizon}
To every treated metric we could associate a horizon function $\Delta$, which defined a hypersurface with special features.
This subsection is dedicated to the comparison of these important functions. As a reminder and for the sake of an overview, 
we will first list all the horizon functions we met in this article. We will further on denote the horizon function with a 
$\Delta$, but put an index for the respective metric:
\begin{eqnarray*}
	& & \Delta_{S\ \ \ }  = 1-\frac{2m}{r}\\
	& & \Delta_{T\ \ }   = 1-\frac{\mu}{r^{d-2}}\\
	& & \Delta_{K\ \ \ }  = r^2-2mr+a^2\\
	& & \Delta_{MP1}      = r^2+a^2-\frac{\mu}{r^{d-4}}\\
	& & \Delta_{MP\ \ }^e = \prod_{i=1}^{N=d/2}{(r^2+a_i^2)-\mu r^2}\\
	& & \Delta_{MP\ \ }^o = \prod_{i=1}^{N=(d-1)/2}{(r^2+a_i^2)-\mu r}
\end{eqnarray*}
$\Delta_{MP1}$ is an obvious generalization of $\Delta_T$, which on the other hand contains $\Delta_S$ as special case.
Furthermore is $\Delta_{MP}^o$ a generalization of the horizon function of the Kerr metric. 
Is with this the connection between these functions exhausted? To answer this question, let once again be pointed out that 
the essential information isn't the function itself, but its set of roots. Now, a function is not given uniquely by its set
of roots. For example possesses the product of a function $f$ with another function which is everywhere nonzero, the same set
of roots as $f$ does. We want to call two functions which only differ from such a nonvanishing function \emph{similar} and 
use the symbol $\approx$ for that. For $r\neq0$ is therefore $\Delta\approx r^k\Delta$ for all $k\geq0$. To not change the 
$rr$-component of the metric, one can simply multiply the denominator with the same power of $r$. In this way the following 
similarities result:
\begin{eqnarray*}
	& & \Delta_{S\ \ \ \ } \approx r^2-2mr\\
	& & \Delta_{T\ \ }   \approx r^{d-1}-\mu r\ \ \ \ \ \ \ \ \ \ \ \ \ \ \ \ \ \approx r^d-\mu r^2\\
	& & \Delta_{K\ \ \ }  =       r^2+a^2-2mr\\
	& & \Delta_{MP1}      \approx r^{d-1}+r^{d-3}a^2-\mu r^2\ \ \ \approx r^d+r^{d-2}a^2-\mu r^2\\
	& & \Delta_{MP\ \ }^e =       \prod_{i=1}^{N=d/2}{(r^2+a_i^2)-\mu r^2}\\
	& & \Delta_{MP\ \ }^o =       \prod_{i=1}^{N=(d-1)/2}{(r^2+a_i^2)-\mu r}.
\end{eqnarray*}
Now it is possible to see more clearly the relationship between the different horizon functions. For $\Delta_T$ and $\Delta_{MP1}$ 
two similarities are given to point out the relationship to $\Delta_{MP}^e$ as well as to $\Delta_{MP}^o$.

We also want to discuss the role of $a$ or the $a_i$. Setting $a=0$, then $\Delta_K$ becomes $\Delta_S$ and $\Delta_{MP1}$ 
becomes $\Delta_T$. Setting further in $\Delta_{MP}^e$ $a_i=0$ for every $i$ but one, without loss of generality let $a_1\neq0$, 
then 
\begin{eqnarray*}
\Delta_{MP}^g &=& r^{2(N-1)}(r^2+a_1^2)-\mu r^1 = r^{d-2}(r^2+a_1^2)-\mu r^2\\
              &=& r^d+r^{d-2}a_1^2-\mu r^2 = \Delta_{MP1}.
\end{eqnarray*}
An analogous calculation can be done for $\Delta_{MP}^o$. Comparing $\Delta_{MP1}$ with $\Delta_{MP}^g$ or $\Delta_{MP}^u$, 
one realizes that for every additional rotationplane a factor $r^2$ of $\Delta_{MP1}$ is "converted" into $r^2+a_i^2$. 
In $\Delta_{MP}^g$ and $\Delta_{MP}^u$ we thus found two functions, in which every other horizon function is contained. 

By the insight, how the horizon functions are related and with Lemma \ref{lem.quadrik} we can now understand better the dependence 
of the existence of a horizon for a given $r$ and $\mu$ from the choice of the angular momenta. Lemma \ref{lem.quadrik} namely 
says that for given $r$ and $\mu$ there are infinitely many possibilities to choose such $a_1$ and $a_2$ which allow the 
existence of an horizon. This wasn't the case for metrics which considered only rotation in one plane. There, always two 
possibilities existed: 
\begin{itemize}
 \item $a=\pm\sqrt{m^2-(r-m)^2}$ in Kerr spacetime and
 \item $a=\pm\sqrt{\mu-r^2}$ in fivedimensional Myers-Perry spacetime with only one rotation plane.                          
 \end{itemize}
Setting one parameter of a hyperbola equal to zero, the remaining parameter has only two possibilities left.

Finally let us point out the remarkable fact be pointed out that the horizon functions are similar to polynomials in $r$, 
or simply are polynomials, what maybe wasn't to be expected. 
%
\newpage
\section*{Appendix: Ricci-flatness of the Tangherlini metric}
In this appendix we want to show that the Tangherlini metric is indeed Ricci-flat, as to the authors knowledge a proof of that 
fact still cannot be found in the literature. In addition, in this proof we will use the statement of Lemma \ref{lem.ricci} 
which is also supposed to be a new result.

To show that a metric fulfills the vacuum Einstein equations, it suffices to show that it is Ricci-flat. For this purpose
we use the Cartan structure formalism. Therefor we define an orthonormal basis of $1$-forms $\{\Theta^l\}$ $l=0,...,d$ by    
\begin{eqnarray*}
\Theta^0 & = & \sqrt{\left(1-\frac{\mu}{r^{d-2}}\right)}\ dt\\ 
\Theta^1 & = & \frac{1}{\sqrt{1-\frac{\mu}{r^{d-2}}}}\ dr\\
\Theta^2 & = & r\ d\chi_2\\ 
\Theta^i & = & r\prod_{s=2}^{i-1}\sin\chi_s\ d\chi_i, 
\end{eqnarray*}
where the $\{\chi_i\}$ again denote the generalized spherical coordinates and $i=3,...,d$. 
We recall that for the connection forms with respect to orthonormal bases the symmetry relations
$\omega^0_i=\omega^i_0,\ \omega^i_j=-\omega^j_i$ hold. In particular it holds $\omega^i_i=0$. With the help of these relations 
and the first structure equation $d\theta^i+\omega^i_k\wedge\Theta^k=0$ the connection forms are able to be uniquely determined.
For this purpose we firstly calculate the total differential of the above $1$-forms:
\begin{eqnarray*}
d\Theta^0 &=& \frac{(d-2)\mu}{2r^{d-1}}\frac{1}{\sqrt{1-\frac{\mu}{r^{d-2}}}}\ dr\wedge dt =                  \frac{(d-2)\mu}{2r^{d-1}}\frac{1}{\sqrt{1-\frac{\mu}{r^{d-2}}}}\ \Theta^1\wedge\Theta^0\\
d\Theta^1 &=& 0\\
d\Theta^2 &=& dr\wedge d\chi_2 = \frac{1}{r}\sqrt{1-\frac{\mu}{r^{d-2}}}\ \Theta^1\wedge\Theta^2\\
d\Theta^i &=& \prod_{s=2}^{i-1}\sin\chi_s\ dr\wedge d\chi_i+r\sum_{k=2}^{i-1}\left(\cos\chi_k\prod_{s=2,s\neq k}^{i-1}
\sin\chi_s\ d\chi_k\wedge d\chi_i\right)\\
          &=& \frac{1}{r}\sqrt{1-\frac{\mu}{r^{d-2}}}\ \Theta^1\wedge\Theta^i+\frac{1}{r}\sum_{k=2}^{i-1}\left(\cot\chi_k
\prod_{s=2}^{k-1}\frac{1}{\sin\chi_s}\ \Theta^k\wedge\Theta^i\right).
\end{eqnarray*}
For $i>2$. For the empty product we set $\prod_{k=2}^{1}\frac{1}{\sin\chi_k}:=1$.

After comparison to the first structure equation the connection forms which are different from zero yield
\begin{eqnarray*}
\omega^0_1 &=& \omega^1_0 = \frac{(d-2)\mu}{2r^{d-1}}\frac{1}{\sqrt{1-\frac{\mu}{r^{d-2}}}}\ \Theta^0\\
\omega^2_1 &=& -\omega^1_2 = \frac{1}{r}\sqrt{1-\frac{\mu}{r^{d-2}}}\ \Theta^2\\
\omega^i_1 &=& -\omega^1_i = \frac{1}{r}\sqrt{1-\frac{\mu}{r^{d-2}}}\ \Theta^i\\
\omega^i_l &=& -\omega^l_i = \frac{\cot\chi_l}{r}\prod_{s=2}^{l-1}\frac{1}{\sin\chi_s}\ \Theta^i,
\end{eqnarray*}
where $2\leq l\leq i-1$ und $i>2$. With the usage of the second structure equation $d\omega^i_j+\omega^i_k\wedge\omega^k_j=\Omega^i_j$
one now can calculate the curvature forms $\Omega^i_j$. For this we first calculate total differentials of the connection 
forms:
\begin{eqnarray*}
d\omega^0_1 &=& d\left(\frac{(d-2)\mu}{2r^{d-1}}\ dt\right) = -\frac{(d-1)(d-2)\mu}{2r^{d}}\ dr\wedge dt 
             = -\frac{(d-1)(d-2)\mu}{2r^{d}}\ \Theta^1\wedge\Theta^0\\
d\omega^2_1 &=& d\left(\sqrt{1-\frac{\mu}{r^{d-2}}}\ d\chi_2\right) = \frac{(d-2)\mu}{2r^{d-1}}\frac{1}{\sqrt{1-\frac{\mu}{r^{d-2}}}}\ dr\wedge d\chi_2
             = \frac{(d-2)\mu}{2r^{d}}\ \Theta^1\wedge\Theta^2\\
d\omega^i_1 &=& d\left(\sqrt{1-\frac{\mu}{r^{d-2}}}\prod_{s=2}^{i-1}\sin\chi_s\ d\chi_i\right)\\
            &=& \frac{(d-2)\mu}{2r^{d-1}}\frac{1}{\sqrt{1-\frac{\mu}{r^{d-2}}}}\prod_{s=2}^{i-1}\sin\chi_s\ dr\wedge d\chi_i
             + \sqrt{1-\frac{\mu}{r^{d-2}}}\sum_{k=2}^{i-1}\left(\cot\chi_k\prod_{s=2}^{i-1}\sin\chi_s\ d\chi_k\wedge d\chi_i\right)\\
            &=& \frac{(d-2)\mu}{2r^{d}}\ \Theta^1\wedge\Theta^i+ \frac{1}{r^2}\sqrt{1-\frac{\mu}{r^{d-2}}}\sum_{k=2}^{i-1}\left(\cot\chi_k\prod_{s=2}^{k-1}\frac{1}{\sin\chi_s}\ \Theta^k\wedge\Theta^i\right)\\
d\omega^i_l &=& d\left(\cos\chi_l\prod_{s=l+1}^{i-1}\sin\chi_s\ d\chi_i\right)\\
            &=& -\prod_{s=l}^{i-1}\sin\chi_s\ d\chi_l\wedge d\chi_i
             + \sum_{k=l+1}^{i-1}\cot\chi_l\cot\chi_k\prod_{s=l}^{i-1}\sin\chi_s\ d\chi_k\wedge d\chi_i\\
            &=& -\frac{1}{r^2}\prod_{s=2}^{l-1}\frac{1}{sin^2\chi_s}\ \Theta^l\wedge\Theta^i
             + \frac{1}{r^2}\cot\chi_l\prod_{s=2}^{l-1}\frac{1}{\sin\chi_s}\sum_{k=l+1}^{i-1}\left(\cot\chi_k\prod_{s'=2}^{k-1}\frac{1}{\sin\chi_s'}\ 
                \Theta^k\wedge\Theta^i\right).
\end{eqnarray*}
Where again holds $i>j$. We now plug in the found expressions into the second structure equation. The curvature forms which are 
different from zero then yield as follows. Thereby is $i,k>1$, $j>2$, $l>j$ and the relations $\Omega^0_i=\Omega^i_0$ and 
$\Omega^i_k=-\Omega^k_i$ hold.
\begin{eqnarray*}
\Omega^0_1 &=& d\omega^0_1+\omega^0_k\wedge\omega^k_1= d\omega^0_1= -\frac{(d-1)(d-2)\mu}{2r^{d}}\ \Theta^1\wedge\Theta^0\\
\Omega^0_i &=& d\omega^0_i+\omega^0_k\wedge\omega^k_i= \omega^0_1\wedge\omega^1_i= -\frac{(d-2)\mu}{2r^{d}}\ \Theta^0\wedge\Theta^i\\
\Omega^1_2 &=& d\omega^1_2+\omega^1_k\wedge\omega^k_2 = d\omega^1_2 = -\frac{(d-2)\mu}{2r^{d}}\ \Theta^1\wedge\Theta^2\\
\Omega^1_j &=& d\omega^1_j+\omega^1_k\wedge\omega^k_j = -\frac{(d-2)\mu}{2r^{d}}\ \Theta^1\wedge\Theta^j
               - \frac{1}{r^2}\sqrt{1-\frac{\mu}{r^{d-2}}}\sum_{k=2}^{j-1}\left(\cot\chi_k
               \prod_{s=2}^{k-1}\frac{1}{\sin\chi_s}\ \Theta^k\wedge\Theta^j\right)\\
           & & +\frac{1}{r^2}\sqrt{1-\frac{\mu}{r^{d-2}}}\sum_{k=2}^{j-1}\left(\cot\chi_k\prod_{s=2}^{k-1}\frac{1}{\sin\chi_s}\ 
               \Theta^j\wedge\Theta^k\right)
            = -\frac{(d-2)\mu}{2r^{d}}\ \Theta^1\wedge\Theta^j\\
\Omega^2_j &=& d\omega^2_j+\omega^2_k\wedge^k_j = \frac{1}{r^2}\ \Theta^2\wedge\Theta^j
               - \frac{1}{r^2}\cot\chi_2\sum_{k=3}^{j-1}\left(\cot\chi_k\prod_{s=2}^{k-1}\frac{1}{\sin\chi_s}\ \Theta^k\wedge\Theta^j\right)\\
           & & - \frac{1}{r^2}\left(1-\frac{\mu}{r^{d-2}}\right)\ \Theta^2\wedge\Theta^j
               + \frac{1}{r^2}\cot\chi_2\sum_{k=3}^{j-1}\left(\cot\chi_k\prod_{s=2}^{k-1}\frac{1}{\sin\chi_s}\ \Theta^k\wedge\Theta^j\right)\\
           &=& \frac{\mu}{r^{d}}\ \Theta^2\wedge\Theta^j
\end{eqnarray*}
\begin{eqnarray*}            
\Omega^j_l &=& d\omega^j_l+\omega^j_1\wedge\omega^1_l+\sum_{k=2}^{j-1}\omega^j_k\wedge\omega^k_l+\sum_{k=j+1}^{l-1}\omega_k^j\wedge\omega_l^k\\
           &=& \frac{1}{r^2}\prod_{s=2}^{j-1}\frac{1}{\sin^2\chi_s}\ \Theta^j\wedge\Theta^l
               -\frac{1}{r^2}\cot\chi_j\prod_{s=2}^{j-1}\frac{1}{\sin\chi_s}\sum_{k=j+1}^{l-1}\left(\cot\chi_k\prod_{s'=2}^{k-1}\frac{1}{\sin\chi_s'}
               \ \Theta^k\wedge\Theta^l\right)\\
           & & -\frac{1}{r^2}\left(1-\frac{\mu}{r^{d-2}}\right)\ \Theta^j\wedge\Theta^l
               -\frac{1}{r^2}\left(\sum_{k=2}^{l-1}\cot^2\chi_k\prod_{s=2}^{k-1}\frac{1}{\sin^2\chi_s}\right)\ \Theta^j\wedge\Theta^l\\
           & & +\frac{1}{r^2}\cot\chi_j\prod_{s=2}^{j-1}\frac{1}{\sin\chi_s}\sum_{k=j+1}^{l-1}\left(\cot\chi_k\prod_{s'=2}^{k-1}\frac{1}{\sin\chi_s'}
               \ \Theta^k\wedge\Theta^l\right)\\
           &=& \frac{1}{r^2}\left(\prod_{s=2}^{j-1}\frac{1}{\sin^2\chi_s}
              -\sum_{k=2}^{j-1}\left(\cot^2\chi_k\prod_{s=2}^{k-1}\frac{1}{\sin^2\chi_s}\right)+\frac{\mu}{r^{d-3}}-1\right)\ \Theta^j\wedge\Theta^l \\
           &=& \frac{\mu}{r^{d}}\ \Theta^j\wedge\Theta^l.
\end{eqnarray*}
For the calculation of the $\Omega^i_j$ the following Lemma was used:

\begin{lem}\label{lem.ricci}                       
\bdm
\prod_{s=2}^{n-1}\frac{1}{\sin^2\chi_s}-\sum_{k=2}^{n-1}\left(\cot^2\chi_k\prod_{s=2}^{k-1}\frac{1}{\sin^2\chi_s}\right)=1\ \ \forall n>2
\edm
\end{lem}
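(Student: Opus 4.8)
The plan is to prove this purely combinatorial identity by recognizing the sum on the left as a telescoping sum (equivalently, by a short induction on $n$). The key is to introduce the abbreviation
\bdm
a_k := \prod_{s=2}^{k-1}\frac{1}{\sin^2\chi_s}\qquad (k\geq 2),
\edm
with the empty-product convention $a_2=1$ that is already in force in the paper. Two elementary facts drive the whole argument: first, directly from the definition one has $a_{k+1}=a_k/\sin^2\chi_k$; second, the Pythagorean identity gives $\cot^2\chi_k=\frac{1}{\sin^2\chi_k}-1$.

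Combining these, I would compute the general summand appearing in the statement as a first difference:
\bdm
\cot^2\chi_k\prod_{s=2}^{k-1}\frac{1}{\sin^2\chi_s}=\cot^2\chi_k\,a_k=\frac{a_k}{\sin^2\chi_k}-a_k=a_{k+1}-a_k .
\edm
Summing over $k$ from $2$ to $n-1$ then telescopes, $\sum_{k=2}^{n-1}(a_{k+1}-a_k)=a_n-a_2=a_n-1$. Since the first term on the left-hand side of the asserted identity is exactly $a_n=\prod_{s=2}^{n-1}\frac{1}{\sin^2\chi_s}$, the entire expression collapses to $a_n-(a_n-1)=1$, which is the claim. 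This works for every $n>2$, the case $n=3$ being the degenerate case where the telescoping sum has a single term.

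The same manipulation can be packaged as an induction: the base case $n=3$ is just $\frac{1}{\sin^2\chi_2}-\cot^2\chi_2=1$, and in the inductive step one factors $1/\sin^2\chi_n$ out of the leading product, uses $\frac{1}{\sin^2\chi_n}-\cot^2\chi_n=1$ on the newly separated terms, and is left with the expression for $n$. I do not anticipate any real obstacle here; the only points that need care are the empty-product convention at $k=2$ (so that $a_2=1$ and the telescoping endpoint comes out right) and matching the index range of the inner product $\prod_{s=2}^{k-1}$ in the summand against the leading product $\prod_{s=2}^{n-1}$ — getting these two to line up is precisely what makes the cancellation go through.
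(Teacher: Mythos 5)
Your proof is correct. Writing $a_k=\prod_{s=2}^{k-1}\frac{1}{\sin^2\chi_s}$ and observing that $\cot^2\chi_k\,a_k=a_{k+1}-a_k$ (via $\cot^2\chi_k=\frac{1}{\sin^2\chi_k}-1$) does make the sum telescope to $a_n-a_2=a_n-1$, and the empty-product convention $a_2=1$ gives exactly the right endpoint. The paper instead argues by induction on $n$: it factors $\prod_{s=2}^{n-1}\frac{1}{\sin^2\chi_s}$ out of the whole expression, peels off the $k=n-1$ summand, uses $1-\cos^2\chi_{n-1}=\sin^2\chi_{n-1}$ to cancel one factor, and reduces to the case $n-1$. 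The two arguments are really the same computation organized differently --- your telescoping sum is the ``unrolled'' form of the paper's inductive step, with the identity $\cot^2\chi=\frac{1}{\sin^2\chi}-1$ doing all the work in both. Your version has the small advantage of being a one-line closed computation with no induction bookkeeping; the paper's version keeps each step as a manipulation of the displayed expression itself. Either is acceptable, and you have correctly flagged the only two points that need care (the $a_2=1$ convention and the alignment of the index ranges).
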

\begin{proof}
We perform an induktion with respect to $n$. At first we consider the case $n$=3:
\bdm
\prod_{s=2}^2\frac{1}{\sin^2\chi_s}-\sum_{k=2}^{2}\left(\cot^2\chi_k\prod_{s=2}^1\frac{1}{\sin^2\chi_s}\right)
= \frac{1}{\sin^2\chi_2}-\cot^2\chi_2 = \frac{1}{\sin^2\chi_2}\left(1-\cos^2\chi_2\right)=1
\edm
Does the statement of the Lemma hold for $n-1>2$, then it also holds for $n$:
\begin{eqnarray*}
& &\prod_{s=2}^{n-1}\frac{1}{\sin^2\chi_s}-\sum_{k=2}^{n-1}\left(\cot^2\chi_k\prod_{s=2}^{k-1}\frac{1}{\sin^2\chi_s}\right)
   =\prod_{s=2}^{n-1}\frac{1}{\sin^2\chi_s}\left(1-\sum_{k=2}^{n-1}\left(\cot^2\chi_k\prod_{s=k}^{n-1}\sin^2\chi_s\right)\right)\\
&=& \prod_{s=2}^{n-1}\frac{1}{\sin^2\chi_s}\left(1-\cos^2\chi_{n-1}-\sum_{k=2}^{n-2}\left(\cot^2\chi_k\prod_{s=k}^{n-1}\sin^2\chi_s\right)\right)\\
&=& \prod_{s=2}^{n-2}\frac{1}{\sin^2\chi_s}\left(1-\sum_{k=2}^{n-2}\left(\cot^2\chi_k\prod_{s=k}^{n-1}\sin^2\chi_s\right)\right)
   =\prod_{s=2}^{n-2}\frac{1}{\sin^2\chi_s}-\sum_{k=2}^{n-2}\left(\cot^2\chi_k\prod_{s=2}^{k-1}\frac{1}{\sin^2\chi_s}\right)\\
&=& 1.
\end{eqnarray*}
\end{proof}
To now calculate the components of the Ricci-tensor, $R_{\mu\nu}$, we now use the relation $R_{\mu\nu}=\Omega_\mu^\alpha(e_\alpha,e_\nu)$,
where $\{e_i\}$ is the basis dual to $\{\Theta^i\}$, $i=0,...,d$.
Because of $\Omega_\mu^\alpha\sim\Theta^\mu\wedge\Theta^\alpha$ it follows $\Omega_\mu^\alpha(e_\alpha,e_\nu)=0$ for $\mu\neq\nu$. 
This is why $R_{\mu\nu}=0$ for $\mu\neq\nu$. We calculate the value of the remaining components. For this let be $j>2$:

\begin{eqnarray*}
R_{00} &=& -\frac{(d-1)(d-2)\mu}{2r^{d}}+\sum_{i=2}^{d+1}\frac{(d-1)\mu}{2r^{d}}=0\\
R_{11} &=&  \frac{(d-1)(d-2)\mu}{2r^{d}}-\sum_{i=2}^{d+1}\frac{(d-2)\mu}{2r^{d}}=0\\
R_{22} &=& -\frac{(d-2)\mu}{2r^{d}}-\frac{(d-2)\mu}{2r^{d}}+\sum_{j=3}^{d+1}\frac{\mu}{r^{d}}=0\\
R_{jj} &=& -\frac{(d-2)\mu}{2r^{d}}-\frac{(d-2)\mu}{2r^{d}}+\frac{\mu}{r^{d}}+\sum_{j=4}^{d+1}\frac{\mu}{r^{d}}=0.
\end{eqnarray*}

Thus, the Tangherlini metric is Ricci-flat.

\nocite{*}
\bibliographystyle{amsalpha}
\bibliography{References}

\providecommand{\bysame}{\leavevmode\hbox to3em{\hrulefill}\thinspace}
\providecommand{\MR}{\relax\ifhmode\unskip\space\fi MR }
\providecommand{\MRhref}[2]{%
  \href{http://www.ams.org/mathscinet-getitem?mr=#1}{#2}
}
\providecommand{\href}[2]{#2}
\begin{thebibliography}{O{'}N95}

\bibitem[Bir23]{birk}
G.D. Birkhoff, \emph{Relativity and modern physics}, Harvard University Press,
  1923.

\bibitem[CGP12]{survey}
P.~Chru{\'s}ciel, G.~J. Galloway, and D.~Pollack, \emph{Mathematical general
  relativity: A sampler}, Bull. Amer. Math. Soc. (2012), no.~47, 567--638.

\bibitem[EMT01]{empmat}
R.~Emparan, D.~Mateos, and P.~K. Townsend, \emph{Supergravity supertubes}, JHEP
  (2001), no.~0107, 011.

\bibitem[ER02]{rotri}
R.~Emparan and H.~S. Reall, \emph{A rotating black ring in five dimensions},
  Phys. Rev. Lett. (2002), no.~88, 101101.

\bibitem[ER06]{bring}
\bysame, \emph{Black rings}, Class. Quant. Grav. (2006), no.~23, R169.

\bibitem[ER08]{emprl}
\bysame, \emph{Black holes in higher dimensions}, Living Rev. Rel. (2008),
  no.~11:6.

\bibitem[Har04]{harmark}
T.~Harmark, \emph{Stationary and axisymmetric solutions of higher-dimensional
  general relativity}, Phys. Rev. (2004), no.~D70, 124002.

\bibitem[Heu96]{heusler}
M.~Heusler, \emph{Black hole uniqueness theorems}, Cambridge University Press,
  1996.

\bibitem[Ker63]{kerr}
R.P. Kerr, \emph{Gravitational field of a spinning mass as an example of
  algebraically special metrics}, Phys. Rev. Lett. (1963), no.~11, 237--238.

\bibitem[MP86]{myersperry}
R.~C. Myers and M.~J. Perry, \emph{Black holes in higher dimensional
  space-times}, Ann.Phys. (N.Y.) (1986), no.~172, 304--347.

\bibitem[O{'}N95]{oneill}
B.~O{'}Neill, \emph{The geometry of kerr black holes}, AK Peters, 1995.

\bibitem[Str04]{strau}
N.~Straumann, \emph{General relativity. with applications to astrophysics},
  Springer-Verlag, 2004.

\bibitem[Tan63]{tangh}
F.R. Tangherlini, \emph{Schwarzschild field in n dimensions and the
  dimensionality of space problem}, Nuo.Cim. (1963), no.~27, 636.

\end{thebibliography}

\end{document}